\begin{document}
\renewcommand\footnotetextcopyrightpermission[1]{} 

\title{Recursive Session Logical Relations}

\author{Farzaneh Derakhshan}
\affiliation{%
  \institution{Carnegie Mellon University}
  \country{USA}
}
\author{Stephanie Balzer}
\affiliation{%
  \institution{Carnegie Mellon University}
  \country{USA}
}

\renewcommand{\shortauthors}{Derakhshan and Balzer}

\begin{abstract}
Program \emph{equivalence} is the fulcrum for reasoning about and proving properties of programs.
For noninterference, for example, program equivalence up to the secrecy level of an observer is shown.
A powerful enabler for such proofs are \emph{logical relations}.
Logical relations only recently were adopted for session types---but exclusively for terminating languages.
This paper scales logical relations to \emph{recursive} session types.
It develops a logical relation for progress-sensitive noninterference for linear session types, tackling the challenges non-termination and concurrency pose.
The contributions include secrecy-polymorphic processes and the logical relation with metatheory.
A distinguishing feature is the choice of \emph{``step index''} of the logical relation, allowing for a natural proof of transitivity and soundness.

\end{abstract}



%
\keywords{Logical relation, Linear session types, Progress-sensitive noninterference, Non-termination, Concurrency.}

\maketitle
\section{Introduction}

\emph{Logical relations} are a powerful and well-established proof method to show logical \emph{equivalence} of programs.
Asserting equivalence of two programs is at the heart of many program verification problems.
One such verification problem is \emph{noninterference}, guaranteeing that an adversary cannot infer any secrets from made observations.
Noninterference amounts to a program equivalence up to the secrecy level of the adversary.

While logical relations have been applied to a wide application domain---from functional languages to imperative languages with higher-order store---they have only recently been explored in a session-typed setting.
In this setting computation is carried out by \emph{concurrently} interacting \emph{processes} that exchange messages along channels.
Computation is inherently \emph{effectful} because message exchange enact state transitions in the involved processes.
Message exchange can also lead to a change in channel topology when channels are sent along channels.
Given their ability to prescribe change enacted by message exchange, session types \cite{HondaCONCUR1993, HondaESOP1998,CairesCONCUR2010,WadlerICFP2012} are prominent in message-passing concurrency.
Session types moreover enjoy a strong theoretical foundation by a Curry-Howard correspondence between \emph{linear logic} and the session-typed $\pi$-calculus~\cite{CairesCONCUR2010,WadlerICFP2012,KokkePOPL2019}.

Surprisingly, logical relations for session types have primarily focused on unary logical relations for proving termination~\cite{PerezESOP2012,PerezARTICLE2014,DeYoungFSCD2020}, except for a binary relation for
parametricity~\cite{CairesESOP2013} and noninterference \cite{DerakhshanLICS2021}, and have been \emph{exclusively} formulated for the \emph{terminating} linear session type fragment.

This paper develops a logical relation for program equivalence for \emph{recursive} session types.
Support of possibly non-terminating programs is necessary for many realistic applications, especially for emerging new platforms and computing models, such as cloud computing.
In these domains, noninterference is an important correctness concern.
To demonstrate practicality of our logical relation, we instantiate it for \emph{progress-sensitive noninterference} \cite{HeidinSabelfeldMartkoberdorf2011}.
A progress-sensitive formulation treats divergences as an observable outcome and thus provides a stronger guarantee in a concurrent setting where the \emph{termination channel} \cite{SabelfeldIEE2003} can be scaled to many parallel computations, each leaking ``just'' one bit~\cite{StefanICFP2012,AlpernasOOPSLA2018}.

To assert progress-sensitive noninterference by typing checking, we develop a flow-sensitive information flow control (IFC) type system for intuitionistic linear session types, yielding the language $\lang$.
A distinguishing feature of $\lang$ is its support of \emph{secrecy polymorphism}, allowing processes to be instantiated at varying secrecy levels.

Our meta theoretic results include, besides fundamental theorem and noninterference, the proof that our logical relation is an equivalence relation as well as soundness.
The former requires proof of transitivity, a notoriously challenging property to prove, particularly in the presence of recursion.
Recursive types mandate the use of a measure like step indexing or later modalities~\cite{AppelMcAllesterTOPLAS2001,AhmedESOP2006,DreyerLICS2009} to keep the logical relation well-founded.
A naive adoption of step indexing is too crude for program equivalence in a  message-passing setting.
Instead, we devise an index that uses the number of messages observed as an upper bound.
This choice paves the way for proving transitivity.
To ensure soundness of the logical relation, we prove that any two logically equivalent programs are indistinguishable by forming a bisimulation bounded by the number of observations between them.


Additional challenges that we had to address when developing the logical relation where posed by the fact that \textit{(i)} concurrency rules out lock-step execution of programs, complicating synchronization of the two programs to ensure relatedness, and that \textit{(ii)} high-secrecy code can include non-termination, complicating a progress-sensitive statement, which demands that both runs will either be observably productive or not.

In summary, this paper makes the following contributions:

\begin{itemize}

\item  flow-sensitive IFC type system for an intuitionistic linear session type language with \emph{recursive types} and \emph{secrecy-polymorphic} process definitions, yielding the language $\lang$ with proof of type safety;

\item \emph{recursive session logical relation} for noninterference bounded by the number of observable messages, amounting to the first logical relation for session types with recursive types;

\item statement and proof of logical equivalence, proving \emph{transitivity}, symmetry, and \emph{progress-sensitive noninterference} for well-typed $\lang$ programs;

\item statement and proof of adequacy theorem, asserting a \emph{weak stratified bisimulation} for logically equivalent programs closed with the same context.

\end{itemize}
\section{Background: IFC \& session types}\label{sec:2.background}

This section familiarizes with intuitionistic linear session types \cite{ToninhoESOP2013,ToninhoPhD2015,CairesARTICLE2016} and prior work on information flow control (IFC) in that setting \cite{DerakhshanLICS2021}.

\subsection{Motivating example}\label{sec:2.background-motivation}

We illustrate intuitionistic session-typed programming and IFC on a banking example.
Let's assume that our bank has two customers, Alice and Bob, with accounts each.
In a secure system, we expect that Alice's account can only be queried by Alice or the bank, but neither by Bob or any walk-in customer.
The same must hold for Bob's account.
We can express this policy by defining corresponding secrecy levels and a lattice on them:

\begin{center}
\begin{small}
$\mb{guest} \sqsubseteq \mb{alice} \sqsubseteq \mb{bank} \qquad \mb{guest} \sqsubseteq \mb{bob} \sqsubseteq \mb{bank}$
\end{small}
\end{center}

\Cref{fig:example} shows two run-time instances of our example implemented in an intuitionisticaly linear session-typed language.
\emph{Session types} prescribe the \emph{protocols} of message exchange between concurrently executing \emph{processes}.
A run-time program instance thus amounts to a configuration of processes connected by \emph{channels}.
The process configuration on the left consists of the {\small$\m{Bank}$} process, the customer processes {\small$\m{Alice}$} and {\small$\m{Bob}$}, along with their bank account processes {\small$\m{aAcc}$} and {\small$\m{bAcc}$}, resp., and corresponding processes for authentication.
The latter comprises, per customer, an authorization process ({\small$\m{\_Auth}$}), which provides access to a customer's bank account ({\small$\m{\_Acc}$}), given successful authentication by a verifier process ({\small$\m{\_Ver}$}) using the customer's PIN ({\small$\m{\_Pin}$}).
The process configuration on the right shows the configuration after Alice's authorization process has sent Alice's PIN to Alice's verifier ({\small$\mathbf{send}\,u\,x$}) for authentication.
The ability to pass channels along channels gives rise to \emph{effectful} computation, changing the connectivity structure and communication pathways of a configuration.

\begin{figure}
\centering
\includegraphics[scale=0.25]{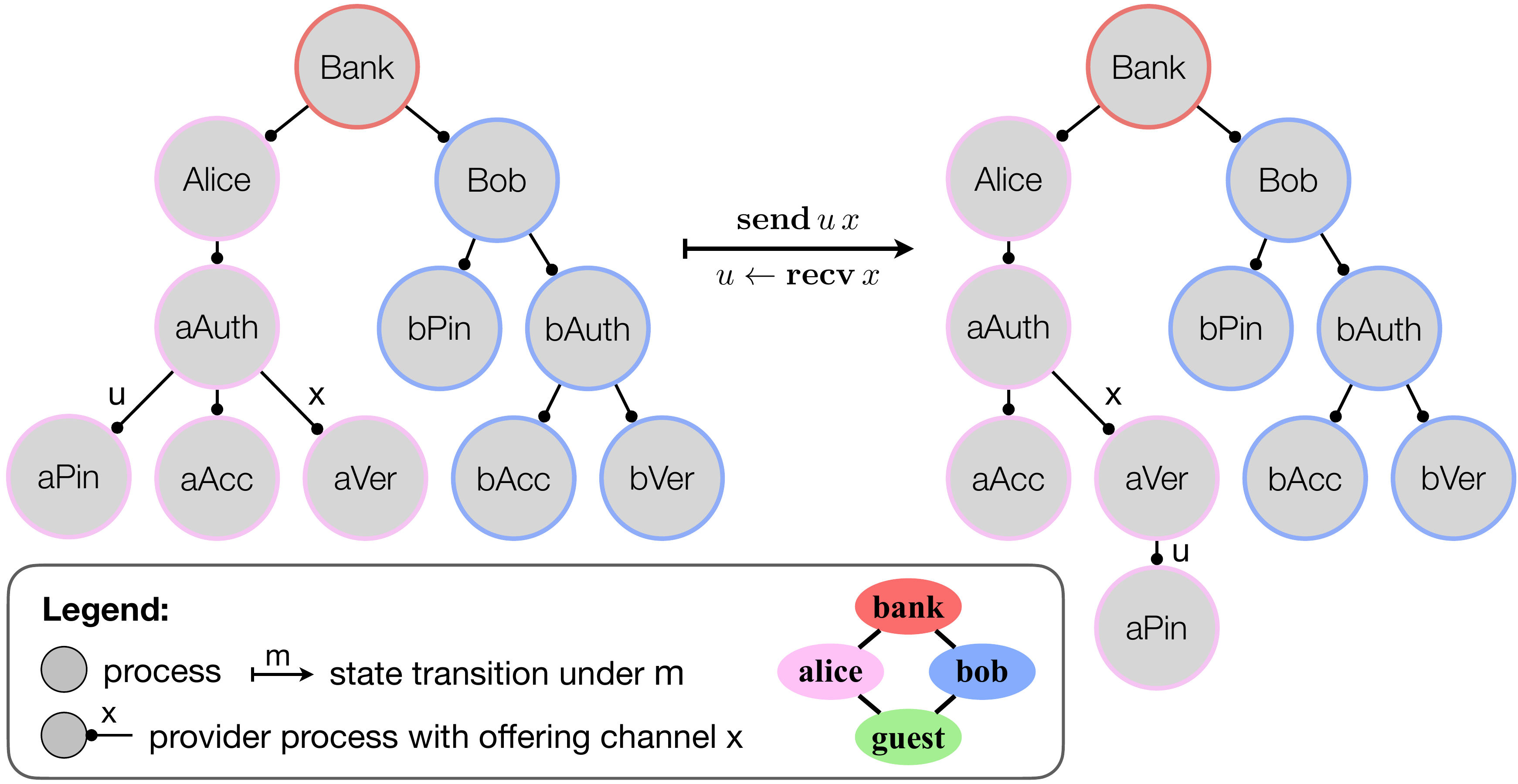}
\caption{State transitions among process configurations.}
\label{fig:example}
\end{figure}

To provide a flavor of session-typed programming, we give the session types that define the protocol governing the banking application.
We use the language $\sill$ \cite{ToninhoESOP2013,ToninhoPhD2015}, which is based on a Curry-Howard correspondence between intuitionistic linear logic and the session-typed $\pi$-calculus~\cite{CairesCONCUR2010,CairesARTICLE2016} and whose types and terms we summarize in \Cref{tab:session_types}:

\begin{center}
\begin{small}
\begin{minipage}[t]{\textwidth}
\begin{tabbing}
$\m{customer}$ \=$=$ \= $\m{pin} \multimap \m{auth}_{\m{out}} \multimap  1$ \\
$\m{pin}$ \> $=$ \>$\oplus \{\mathit{tok_1}{:}\m{pin}, \dots, \mathit{tok_n}{:} \m{pin}\}$\\ 
$\m{auth}_{\m{out}}$\> $=$ \> $\m{pin} \multimap \oplus \{$\= $\mathit{succ}{:} \m{account} \otimes \m{auth}_{\m{in}}, \, \mathit{fail}{:} \m{pin} \otimes \m{auth}_{\m{out}}\}$ \\
$\m{auth}_{\m{in}}$\> $=$ \> $\m{account} \multimap \m{pin} \otimes \m{auth}_{\m{out}}$\\
$\m{ver}$ \> $=$ \> $\m{pin} \multimap \oplus \{\mathit{succ}{:}\m{pin} \otimes \m{ver}, \mathit{fail}{:} \m{pin} \otimes \m{ver}\}$ \\
$\m{account}$ \> $=$ \> $\oplus\{\mathit{high}{:} \m{account},\,\mathit{med}{:}\m{account},\, \mathit{low}{:}\m{account}\}$ 
\end{tabbing}
\end{minipage}
\end{small}
\end{center}

\noindent Most of the above session type definitions are recursive, {\small$\m{auth}_{\m{out}}$} and {\small$\m{auth}_{\m{in}}$} are even mutually recursive.
$\sill$ adopts an \emph{equi-recursive}~\cite{CraryPLDI1999} interpretation, which avoids explicit (un)fold messages and relates types up to their unfolding.
Recursive session types must also be \emph{contractive}~\cite{GayHoleARTICLE2005}, demanding a message exchange before recurring.

We briefly comment on the overview of intuitionistic linear session types given in \Cref{tab:session_types}.
A crucial---and probably unusual---characteristics of session-typed processes is that a process \emph{changes} its type along with the messages it exchanges.
As a result, a process' type always reflects the current protocol state.
\Cref{tab:session_types} lists state transitions inflicted by a message exchange in the first and second column and corresponding process terms in the third and fourth column.
The fifth column provides the operational meaning of a type and the last column its polarity.
Positive types have a sending semantics, negative types a receiving semantics.

To guarantee type preservation (\aka session fidelity) in the presence of perpetual protocol change, session types use linearity.
Linearity ensures that a channel connects exactly two processes and thus imposes a \emph{tree} structure on the process configuration.
An intuitionistic viewpoint moreover interprets the process configuration as a \emph{rooted tree}, allowing one channel endpoint to become the \emph{provider} (child node) and the other to become its unique \emph{client} (parent node).
Channels are then typed with the session type of the providing process, requiring provider and client to behave dually: when one sends, the other receives.
In linear session types based on classical logic~\cite{WadlerICFP2012}, the two endpoints of a channel are typed separately using linear negation to enforce duality.

\begin{table*}
\caption{Intuitionistic linear session types: types and terms (\textbf{b}efore and \textbf{a}fter exchange), operational meaning, and \textbf{p}olarity.}
\label{tab:session_types}
\begin{small}
\begin{tabular}{@{}lllllr@{}}
\toprule
\multicolumn{2}{@{}l}{\textbf{Session type (b/a)}} &
\multicolumn{2}{l}{\textbf{Process term (b/a)}} &
\textbf{Description} &
\textbf{p} \\[3pt]
$x {:} \oplus\{\ell{:}A\}_{\ell \in L}$ & $x {:} A_k$ & $x.k; P$ &
$P$ & provider sends label $k$ along $x$, continues with $P$ & $+$ \\
 & & $\mathbf{case}\, x (\ell {\Rightarrow} Q_{\ell})_{\ell \in L}$ & $Q_k$ &
\multicolumn{2}{l}{client receives label $k$ along $x$, continues with $Q_k$} \\[2pt]
$x {:} \&\{\ell{:}A\}_{\ell \in L}$ & $x {:} A_k$ & $\mathbf{case}\, x (\ell {\Rightarrow} P_{\ell})_{\ell \in L}$ &
$P_k$ & provider receives label $k$ along $x$, continues with $P_k$ & $-$  \\
 & & $x.k; Q$ & $Q$ & \multicolumn{2}{l}{client sends label $k$ along $x$, continues with $Q$}  \\[2pt]
$x {:} A \otimes B$ & $x {:} B$ & $\mathbf{send}\,y\,x; P$ &
$P$ & provider sends channel $y {:} A$ along $x$, continues with $P$  & $+$ \\
 & & $z {\leftarrow} \mathbf{recv}\,x;Q_z$ & ${Q_y}$ &
\multicolumn{2}{l}{client receives channel $y {:} A$ along $x$, continues with $Q_y$}  \\[2pt]
$x {:} A \multimap B$ & $x {:} B$ & $z {\leftarrow} \mathbf{recv}\,x;P_z$ & ${P_y}$ &
provider receives channel $y {:} A$ along $x$, continues with $P_y$   & $-$\\
 & & $\mathbf{send}\,y\,x; Q$ & $Q$ &
\multicolumn{2}{l}{client sends channel $y {:} A$ along $x$, continues with $Q$}  \\[2pt]
$x {:} 1$ & - & $\mathbf{close}\, x$ &
- & provider sends ``$\m{end}$'' along $x$ and terminates  & $+$ \\
 & & $\mathbf{wait}\,x;Q$ & $Q$ & \multicolumn{2}{l}{client receives ``$\m{end}$'' along $x$, continues with Q} \\
$x : A$ & $x : A$ & $\pseq{\pcut{z}{Q}}{P_z}$ & $P_z$ & \multicolumn{2}{l}{provider spawns new process running $Q$ along channel $z$, continues with $P_z$}  \\[2pt]
$x : A$ & - & $\pfwd{x}{y}$ & - & \multicolumn{2}{l}{provider forwards itself to process providing along channel $y {:} A$ and terminates} \\[2pt]
 $x {:} Y$ & $x{:}A$ & - &
- & \multicolumn{2}{l}{recursive and contractive session type definition $Y=A$ ($Y$ occurs in $A$)} \\
\bottomrule
\end{tabular}
\end{small}
\end{table*}

The above session types dictate the following protocol:
When a customer opens an account with the bank, the bank sends them a PIN and authorization process ({\small$\m{customer}$}).
A customer can now repeatedly query their account's balance ({\small$\mi{high}$}, {\small$\mi{med}$}, {\small$\mi{low}$}), preceding authentication.
To authenticate themselves, a customer sends their PIN to the authorization process.
The authorization process then goes ahead and sends the PIN to the verifier.
If the authentication by the verifier is successful, the verifier sends the message {\small$\mi{succ}$} to the authorization process, otherwise the message {\small$\mi{fail}$}.
In either case, the verifier returns the PIN to the authorization process.
The latter relays the outcome of the authentication to the customer, yielding the bank account in case of {\small$\mi{succ}$} and the PIN in case of {\small$\mi{fail}$}.
Once done, a customer returns the account to the authorization process and receives their PIN, allowing them to start over.

Next we provide the process implementations {\small$\m{aVerifier}$} and {\small$\m{aPin}$} of types {\small$\m{ver}$} and {\small$\m{pin}$}, resp., for Alice:

\begin{flushleft}
\begin{small}
\begin{minipage}[t]{\textwidth}
\begin{tabbing}
$\cdot \vdash \m{aVerifier} :: x{:}\m{ver}$ \\
$x \leftarrow \m{aVerifier}\leftarrow \cdot = ($ \\
\quad \= $u \leftarrow \mb{recv}\, x$ \commentc{// $u{:}\m{pin} \vdash x{:}\oplus\{\mathit{succ}:\m{pin} \otimes \m{ver}, \mathit{fail}{:}\m{pin} \otimes \m{ver}\}$} \\
\> $\mb{case} \, u \, ($\= $\mathit{tok}_j \Rightarrow x.\mathit{succ}; \mb{send}\, u\, x;   x \leftarrow \m{aVerifier} \leftarrow \cdot$\\
\> \> $\mid \mathit{tok}_{i \neq j} \Rightarrow x.\mathit{fail}; \mb{send}\, u\, x;   x \leftarrow \m{aVerifier} \leftarrow \cdot))$ \\[4pt]

$\cdot \vdash \m{aPin} :: u{:}\m{pin}$ \\
$u \leftarrow \m{aPin}\leftarrow \cdot = (u.\mathit{tok_j}; u \leftarrow \m{aPin} \leftarrow \cdot)$
\end{tabbing}
\end{minipage}
\end{small}
\end{flushleft}

\noindent A process implementation consists of the process signature (first two lines) and body (after {\small$=$}).
The first line indicates the typing of channel variables used by the process (left of {\small$\vdash$}) and the type of the providing channel variable (right of {\small$\vdash$}).
The former are going to be child nodes of the process.
The second line binds the channel variables.
In $\sill$, {\small$\leftarrow$} generally denotes variable bindings.
Both {\small$\m{aVerifier}$} and {\small$\m{aPin}$} are leaves and thus do not have any variables occurring on the left of {\small$\vdash$}.
We leave it to the reader to convince themselves, consulting \Cref{tab:session_types}, that the code in the body of the two processes execute the protocol defined by their session types.

While session types ensure that messages are exchanged at run-time according to the protocol specified by the session type, they leave unconstrained the propagation of secret information.
For example, a villain verifier could send a customer's PIN to an attacker, amounting to a \emph{direct} leak of information.
More subtle forms of information leaks are \emph{indirect}, exploiting branching to convey secret information.
This strategy is taken by process {\small$\m{SneakyVerifier}$} defined below.
It signals to an attacker whether the customer authenticated themselves successfully or not by either sending the label {\small$\mathit{s}$} or {\small$\mathit{f}$}, resp.

\begin{flushleft}
\begin{small}
\begin{minipage}[t]{\textwidth}
\begin{tabbing}
$\m{attacker} = \& \{\mathit{s}{:}\m{attacker}, \mathit{f}{:} \m{attacker}\}$ \\[3pt]
$y{:}\m{attacker} \vdash \m{SneakyVerifier} :: x{:}\m{ver}$ \\
$x \leftarrow \m{SneakyVerifier}\leftarrow y = ( z \leftarrow \mb{recv}\, x$ \\
\quad \= $\mb{case} \, z \, ($\= $\mathit{tok}_j \Rightarrow x.\mathit{succ};y.\mathit{s}; \mb{send}\, z\, x;   x \leftarrow \m{SneakyVerifier} \leftarrow y$\\
\> \> $\mid \mathit{tok}_{i \neq j} \Rightarrow$ \=$x.\mathit{fail}; y.\mathit{f}; \mb{send}\, z\, x;$\\
\> \> \> $x \leftarrow \m{SneakyVerifier} \leftarrow y))$
\end{tabbing}
\end{minipage}
\end{small}
\end{flushleft}

While the above example leaks by virtue of sending an actual message, the \emph{absence} of a message due to non-termination can also be exploited for leaking confidential information.
Especially in a concurrent setting, this \emph{termination covert channel} \cite{SabelfeldIEE2003} can be scaled to many parallel computations, each leaking ``just'' one bit~\cite{StefanICFP2012,AlpernasOOPSLA2018}.
The interested reader can find examples illustrating these scenarios in Appendix~A.1.

\subsection{Possible worlds for IFC}\label{sec:2.background.treeinv}

IFC type systems for binary session types have only recently been contributed by \citet{DerakhshanLICS2021}.
\citeauthor{DerakhshanLICS2021} extend the \emph{terminating} subset of $\sill$ with a flow-sensitive IFC type system, yielding the language $\oldlang$.
Our work is based on $\oldlang$, so we briefly summarize its main features.

$\oldlang$ exploits the rooted tree structure resulting from intuitionism.
Observing that a process configuration naturally aligns with the security lattice (see \Cref{fig:example}), $\oldlang$ enforces this alignment by corresponding invariants preserved by typing.
To this end, the typing judgment is enriched with possible worlds from hybrid logic to denote a process' running secrecy {\small$\runsec{@c}$}, maximal secrecy {\small$\maxsec{[d]}$}, and the security lattice {\small$\Psi_0$} of the application:

\begin{center}
\begin{small}
$\Psi_0; \Delta \vdash P \runsec{@c}:: (x{:}A\maxsec{[d]})$
\end{small}
\end{center}

\noindent The idea is that a process' \emph{maximal secrecy} indicates the maximal level of secret information the process may ever obtain whereas its \emph{running secrecy} denotes the highest level of secret information the process has obtained so far.
We read the typing judgment as: \textit{``Process {\small$P$}, with maximal secrecy {\small$\maxsec{[d]}$} and running secrecy {\small$\runsec{@c}$}, provides a session of type {\small$A$} along channel variable {\small$x$}, given the typing of sessions offered along channel variables in {\small$\Delta$} and given the secrecy levels in the security lattice {\small$\langle \Psi_0, \sqsubseteq, \sqcup \rangle$}''}.
{\small$\Delta$} is a \emph{linear}
context that consists of a finite set of assumptions of the form {\small$y_i{:}B_i\maxsec{[d_i']}$},
indicating for each channel variable {\small$y_i$} its maximal secrecy {\small$\maxsec{[d_i']}$} and the
offered session type {\small$B_i$}.
The typing judgment uses channel variables, admitting {\small$\alpha$}-variance as usual.
Channel variables contrast with channels that are created at run-time upon process spawning.
When the distinction is clear from the context we refer to either as ``channel'' for brevity.

To enforce that a process configuration is in alignment with the security lattice, the following invariants are maintained by typing, referred to as the \emph{tree invariant}:
{\small \textit{(i)} $\forall y{:}B\maxsec{[d']} \in \Delta\,  (\Psi_0 \Vdash \maxsec{d'} \sqsubseteq \maxsec{d})$} and
{\small \textit{(ii)} $\Psi_0 \Vdash \runsec{c} \sqsubseteq \maxsec{d}$}
ensuring that the maximal secrecy of a child node is capped by the maximal secrecy of its parent and that the running secrecy of a process is less than or equal to its maximal secrecy, resp.






The tree invariant is crucial for preventing both direct and indirect flows but not sufficient to exclude them completely.
To ensure secure information flow, the type system needs to additionally ensure (1) that no channel of high secrecy is sent along one with lower secrecy and (2) that a process does not send any information along a low-secrecy channel after obtaining some high secrecy information.
Condition (1) prevents direct flows and is enforced by requiring that the maximal secrecy level of a carrier channel and the channels sent over it match.
Condition (2) is met by ensuring that a process' running secrecy is a sound approximation of the level of secret information the process has obtained so far.
To this end, the type system increases the running secrecy upon each receive to \textit{at least} the maximal secrecy of the sending process and correspondingly guards sends by making sure that the running secrecy of the sending process is \textit{at most} the maximal secrecy of the receiving process.

Our work extends $\oldlang$ with secrecy-polymorphic processes and recursive types, yielding the language $\lang$.
$\lang$ carries over $\oldlang$'s tree invariant and conditions (1) and (2).
An impatient reader may want to glance at \Cref{fig:type-system}, ignoring the rules for process definitions, to see how the tree invariant and conditions (1) and (2) are maintained.
\section{IFC type system with recursion}\label{sec:3.recursion}

This section introduces $\lang$'s type system, configuration typing, and dynamics.
$\lang$ extends $\oldlang$ with recursive types and secrecy-polymorphic processes.
\Cref{sec:5.metatheory} shows that well-typed $\lang$ programs enjoy progress-sensitive noninterference, using the logical relation developed in \Cref{sec:4.logical-relation}.

\subsection{Polymorphic IFC type system}\label{sec:3.rec.type}

The support of recursive session types necessitates process definitions.
Process definitions in $\lang$ are \emph{polymorphic} in their secrecy levels, allowing a process to be instantiated at varying secrecy levels.
To ensure that instantiations respect the tree invariant and constraints imposed on the typing (see \Cref{sec:2.background.treeinv}), polymorphic process definitions come equipped with constraints on the secrecy variables.
The body of a process is then verified assuming that these constraints are met.
It is the obligation of the caller---including the process itself in case of a recursive call---to assert that the constraints are met for the provided instantiations.

\paragraph{Signature checking.}
 
Process definitions are collected in a global signature {\small$\Sig$}.
The signature is populated prior to type checking with all process definitions of the form 
{\small\[\Psi; x_1{:}A_1\maxsec{[\psi_1]}, \cdots, x_n{:}A_n\maxsec{[\psi_n]}\vdash X=P \runsec{@\psi'}:: x{:}B\maxsec{[\psi]}\]}
\noindent (rules {\small$\Sig_1$} and {\small$\Sig_3$} in \Cref{fig:type-system}).
The judgment indicates the process' name {\small$X$}, body {\small$P$}, maximal secrecy {\small$\maxsec{\psi}$}, and running secrecy {\small$\runsec{@\psi'}$} as well as the typing of its used and providing channels.
{\small$\Psi$} is the \emph{extended lattice} containing the security lattice {\small$\Psi_0$} and the constraints on secrecy variables required for verifying process term {\small$P$}. The secrecy labels {\small$\maxsec{\psi}, \runsec{\psi'}$ and $\maxsec{\psi_i}$} are all \emph{secrecy variables}, we use labels {\small$c,d, \dots$} to refer to both concrete secrecy levels and variables. 

\begin{figure*}
\begin{center}
\begin{small}
\begin{mathpar}
\inferrule*[right=$\oplus L$]
{\Psi \Vdash d_2=c \sqcup d_1 \\
\Psi;  \Delta, x:A_{k}[c] \vdash  Q_k@d_2::  y:C[c'] \quad
\forall k \in L}
{\Psi; \Delta, x:\oplus\{\ell:A_{\ell}\}_{\ell \in L}[c] \vdash (\mathbf{case}\, x(\ell \Rightarrow Q_\ell)_{\ell \in L})@d_1::  y:C[c']}

\inferrule*[right=$\oplus R$]
{\Psi; \Delta \vdash P@d_1::  y:A_{k}[c] \\
k \in L}
{\Psi;\Delta \vdash (y.k; P)@d_1 ::  y:\oplus\{\ell:A_{\ell}\}_{\ell \in L}[c]}

\inferrule*[right=$\& L$]
{\Psi \Vdash d_1 \sqsubseteq c \\
\Psi;  \Delta, x:A_{k}[c] \vdash  P@d_1::  y:C[c'] \\
k \in L}
{\Psi; \Delta, x:\&\{\ell:A_{\ell}\}_{\ell \in I}[c] \vdash (x.k; P)@d_1::  y:C[c']}

\inferrule*[right=$\& R$]
{\Psi; \Delta \vdash Q_k@c::  y:A_{k}[c] \\
\forall k \in L}
{\Psi;\Delta \vdash (\mathbf{case}\, y(\ell \Rightarrow Q_\ell)_{\ell \in I})@d_1::  y:\&\{\ell:A_{\ell}\}_{\ell \in L}[c]}

\inferrule*[right=$\otimes L$]
{\Psi \Vdash d_2= c\sqcup d_1 \\
\Psi;\Delta, z:A[c], x:B[c] \vdash  P@d_2:: y: C[c']}
{\Psi;\Delta, x:A \otimes B[c] \vdash (z\leftarrow \mathbf{recv}\, x; P)@d_1::  y:C[c']}

\inferrule*[right=$\otimes R$]
{\Psi;\Delta \vdash  P@d_1:: y:B[c]}
{\Psi;\Delta, z:A[c]\vdash (\mathbf{send}\,z\,y; P)@d_1::  y:A \otimes B[c]}

\inferrule*[right=$\chanin L$]
{\Psi \Vdash d_1 \sqsubseteq d \\
\Psi; \Delta, x:B[d] \vdash  P@d_1:: y:C[c']}
{\Psi;\Delta, z:A[d], x:A \multimap B[d] \vdash (\mathbf{send}\, z\,x; P)@d_1::  y:C[c']}

\inferrule*[right=$\chanin R$]
{\Psi;\Delta,  z:A[c] \vdash  P@c:: y: B[c]}
{\Psi; \Delta \vdash (z \leftarrow \mathbf{recv}\,y; P)@d_1::  y:A \multimap B[c]}

\inferrule*[right=$1 L$]
{\Psi; \Delta \vdash_{\Sigma} Q@d_2 ::  y:T[d] \\
\Psi \Vdash d_2=c \sqcup d_1}
{\Psi;\Delta, x:1[c] \vdash_{\Sigma} (\mathbf{wait}\,x;Q)@d_1 ::  y:T[d]}

\inferrule*[right=$1 R$]
{\strut}
{\Psi;\cdot \vdash_{\Sigma} (\mathbf{close}\,y)@d_1 ::  y:1[c]}

\inferrule*[right=\msc{Fwd}]
{\strut}
{\Psi; x:T[d] \vdash_{\Sigma} (y \leftarrow x)@{c} ::  y:T[d]}

\inferrule*[right=$\Sig_1$]
{\strut}
{\Vdash_{\Sigma; \Psi_0} (\cdot)\; \mathbf{sig}}

\inferrule*[right=$\Sig_2$]
{\Vdash_{\Sigma; \Psi_0} A \; \mb{wfmd} \\
\Vdash_{\Sigma; \Psi_0} \Sigma'\; \mathbf{sig}}
{\Vdash_{\Sigma; \Psi_0} Y = A, \Sigma' \; \mathbf{sig}}

\inferrule*[right=$\Sig_3$]
{\mathtt{concrete}(\Psi) = \Psi_0 \\
\Psi \Vdash \psi' \sqsubseteq \psi, \forall i. \psi_i \sqsubseteq \psi \\
\Psi; x_1:A_1[\psi_1], \dots, x_n:A_n[\psi_n] \vdash_{\Sigma} P @ \psi':: x:B[\psi] \\
\Vdash_{\Sigma; \Psi_0} \Sigma'\; \mathbf{sig}}
{\Vdash_{\Sigma; \Psi_0} \Psi; x_1:A_1[\psi_1], \dots, x_n:A_n[\psi_n]\vdash X=P @ \psi':: x:B[\psi], \Sigma'\; \mathbf{sig}}

\inferrule*[right=\msc{Spawn}]
{\hat{\gamma}(\Delta'_1; B[\psi];\psi')=\Delta_1; B[d'];d_2 \\
\Psi \Vdash \hat{\gamma}(\Psi') \\
\Psi'; \Delta'_1 \vdash X = P @{\psi'} ::x: B[\psi] \in \Sigma \\\\
\Psi \Vdash d_1 \sqsubseteq d_2, d' \sqsubseteq d \\
\Psi; x:B[d'], \Delta_2 \vdash_{\Sigma} Q@{d_1} ::  y:T[d]}
{\Psi;\Delta_1, \Delta_2 \vdash_{\Sigma}  (x^{d'} \leftarrow X@d_2 \leftarrow \Delta_1; Q)@{d_1} ::  y:T[d]}
\end{mathpar}
\end{small}
\end{center}
    \caption{Process term typing rules of $\lang$.}
    \label{fig:type-system}
\end{figure*}

A signature {\small$\Sigma$} is well-typed over the security lattice {\small$\Psi_0$}, i.e., {\small${\Vdash_{\Sigma; \Psi_0}\Sigma \; \mathbf{sig}}$}, if the body {\small$P$} of every process in the signature is well-typed.
The programmer must define the extended lattice {\small$\Psi$} for each process definition such that its constraints over the concrete secrecy levels match precisely the constraints in the security lattice ({\small$\mathtt{concrete}(\Psi)=\Psi_0$}), i.e., for concrete secrecy levels {\small $c,d \in \Psi_0$}, we have {\small $\Psi_0 \Vdash c \sqsubseteq d\, \m{iff}\, \Psi \Vdash c \sqsubseteq d$}.
{\small$\Psi$} moreover must satisfy the tree invariant.

The signature {\small$\Sig$} additionally collects recursive session type definitions (rule $\Sig_2$).
As detailed in \Cref{sec:2.background-motivation}, recursive session types are \emph{equi-recursive}~\cite{CraryPLDI1999}, and \emph{contractive} ({\small $\Vdash_{\Sigma; \Psi_0} A \; \mb{wfmd}$}).
Equi-recursiveness avoids explicit (un)fold messages and relates types up to their unfolding.
Contractiveness demands an exchange before recurring.


\paragraph{Process typing.}

We slightly modify the process term typing judgment introduced in \Cref{sec:2.background.treeinv} to depend on the signature {\small$\Sigma$}, and the extended lattice {\small$\Psi$}:

\begin{center}
\begin{small}
$\Psi; \Delta \vdash_\Sigma P \runsec{@c}:: (x{:}A\maxsec{[d]})$
\end{small}
\end{center}

\noindent The corresponding typing rules are shown in \Cref{fig:type-system}.
The rules maintain the tree invariant and meet the conditions on sends and receives of \Cref{sec:2.background.treeinv}.
This can be achieved by assuming the tree invariant for the conclusion of each rule and asserting it for the premises.
Moreover, the rules that receive along a channel of maximal secrecy {\small$c$}, increase the running secrecy of the receiving process from {\small$d_1$} to {\small$d_1 \sqcup c$} and the rules that send along a channel of secrecy {\small$c$} are guarded with the condition that the running secrecy of the sending process is less than or equal to {\small$c$}.

\Cref{fig:type-system} does not have a special rule for recursive calls.
In line with prior work~\cite{ToninhoESOP2013,ToninhoPhD2015,CairesARTICLE2016,BalzerICFP2017}, we type check recursive calls using rules {\small$\msc{Spawn}$} and {\small$\msc{Fwd}$}.
Since recursive calls are tail calls, every recursive call {\small$y^{d}{:}B\leftarrow X \leftarrow \Delta_1$} can be translated into spawning {\small$X$} offering a fresh channel {\small$x^{d}{:}B$} and then terminating the continuation by identifying {\small$x$} with \small{$y$}, i.e.
{\small$(x^{d}{:}B\leftarrow X \leftarrow \Delta_1); y^d\leftarrow x^d$}.

Rule {\small$\msc{Spawn}$} relies on an order-preserving substitution mapping {\small$\gamma$} for secrecy variables in {\small$\Psi'$}.
The substitution {\small$\gamma$} identifies the specific instantiation provided by the caller to assert the constraints assumed by the callee in {\small$\Psi'$}. 
The premise {\small$d_1 \sqsubseteq d_2$} ensures that the spawner's level of secret information cannot surpass the one of the spawnee, ruling out indirect flows.
The premise {\small$d' \sqsubseteq d$} preserves the tree invariant.

\paragraph{Sneaky verifier - revisited.}

We briefly illustrate polymorphic processes on the $\m{SneakyVerifier}$ discussed in \Cref{sec:2.background-motivation} and show that it is not a well-typed process in $\lang$.
The interested reader can find further examples in Appendix~A.

\begin{flushleft}
\begin{small}
\begin{minipage}[t]{\textwidth}
\begin{tabbing}
$\m{attacker} = \& \{\mathit{s}{:}\m{attacker}, \mathit{f}{:} \m{attacker}\}$ \\[3pt]
$\Psi; y{:}\m{attacker}[\maxsec{\psi_1}]  \vdash \m{SneakyVerifier}\runsec{@\psi'} :: x{:}\m{ver}\maxsec{[\psi]}$ \\
$x \leftarrow \m{SneakyVerifier}\leftarrow y = ( z \leftarrow \mb{recv}\, x;$ \\
\;\;\= $\mb{case} \, z \, ($\= $\mathit{tok}_j \Rightarrow x.\mathit{succ};y.\mathit{s}; \mb{send}\, z\, x;   x \leftarrow \m{SneakyVerifier}@d_2 {\leftarrow} y$\\
\> \> $\mid \mathit{tok}_{i \neq j} \Rightarrow$ \=$x.\mathit{fail}; y.\mathit{f}; \mb{send}\, z\, x;$\\
\> \> \> $x \leftarrow \m{SneakyVerifier}@d_3 \leftarrow y))\runsec{@\mb{\psi'}}$
\end{tabbing}
\end{minipage}
\end{small}
\end{flushleft}

Assume that the extended lattice {\small$\Psi$} contains the constraints {\small$\psi_1=\mb{guest}$} and {\small$\psi=\mb{alice}$}. 
By {\small$\Sig_3$}, {\small$\Psi$} must satisfy {\small$\Psi \Vdash \mb{guest}=\psi_1 \sqsubset \mb{alice}=\psi$} and {\small$\Psi \Vdash \psi' \sqsubseteq \psi=\mb{alice}$}.
Moreover, by {\small$\& L$}, we must have {\small$\Psi \Vdash \psi' \sqsubseteq \psi_1$}, to execute {\small$y.\mathit{s}$}.
By {\small$\multimap R$}, we know that the running secrecy of the process after executing {\small$z \leftarrow \mb{recv}\,x^\psi$} increases to {\small$\psi$}.
And by {\small$\msc{Spawn}$}, we need to know that the running secrecy {\small$d_2$} specified for the callee is at least as high as the caller's, i.e. {\small$\Psi \Vdash \psi \sqsubseteq d_2$}.
It is straightforward to observe that there is no possible substitution {\small$\gamma$} to unify {\small$d_2$} with {\small$\psi'$} as required by the spawn rule, i.e. {\small$\hat{\gamma}(\Psi) \not \Vdash d_2 \sqsubseteq \psi_1 = \mb{guest} \sqsubset  \mb{alice}= \psi \sqsubseteq d_2$}.

\subsection{Asynchronous dynamics}

We define an asynchronous dynamics for $\lang$ that we adopt from prior work~\cite{BalzerICFP2017,DasCSF2021,DerakhshanLICS2021} and extend to account for polymorphic process definitions.
The dynamics is given in \Cref{fig:dynamics} and uses multiset rewriting rules \cite{CervesatoARTICLE2009}.
Multiset rewriting rules express the dynamics as state transitions between configurations and are \emph{local} in that they only mention the parts of a configuration they rewrite.

\begin{figure*}
\begin{center}
\begin{small}
\begin{tabbing}
$\msc{Fwd} \qquad$ \= $\mb{proc}(y_\alpha[c], (y_\alpha \leftarrow x_\beta)@d_1) \ostep [x_\beta/y_\alpha]$  \` $(y_\alpha \not \in \Delta')$ \\

$\msc{Spawn}$ \> $\mb{proc}(y_\alpha[c],  (x^d \leftarrow X \leftarrow \D)@d_2 ; Q@d_1)  \ostep$ \` $(\Psi'; \D' \vdash X = P @{\psi'} ::x: B'[\psi] \in \Sigma)$ \\
\> $\mb{proc}(x_0[d], (\subst{x_0, \D}{x, \m{dom}(\D')} \hat{\gamma}(P))@d_2) \; \mb{proc}(y_\alpha[c],  (\subst{x_0}{x}{Q})@d_1)$ \` $(\hat{\gamma}(\Psi')=\Psi_0\, \text{and}\, x_0\, \mi{fresh})$ \\

$1_{\m{snd}}$ \> $\mb{proc}(y_\alpha[c],(\mathbf{close}\, y_\alpha)@d_1) \ostep \mb{msg}(\mathbf{close}\, y_\alpha)$ \\ 

$1_{\m{rcv}}$ \> $\mb{msg}(\mathbf{close}\, y_\alpha)\; \mb{proc}(x_\beta[{c'}], (\mathbf{wait}\,y_\alpha;Q)@d_1) \ostep \mb{proc}(x_\beta[{c'}], Q@(d_1 \sqcup c))$ \\ 

$\oplus_{\m{snd}}$ \>$\mb{proc}(y_\alpha[c], y_\alpha.k; P@d_1) \ostep \mb{proc}(y_{\alpha+1}[c], ([y_{\alpha+1}/y_{\alpha}]P)@d_1)\; \mb{msg}( y_\alpha.k;y_\alpha \leftarrow y_{\alpha+1})$ \\

$\oplus_{\m{rcv}}$ \> $\mb{msg}(y_\alpha[{c}].k; y_\alpha \leftarrow v_{\delta})) \; \mb{proc}(u_\gamma[{c'}],\mb{case}\, y_\alpha (( \ell \Rightarrow P_\ell)_{\ell \in L})@d_1) \ostep \mb{proc}(u_{\gamma}[{c'}], ([v_{\delta}/y_{\alpha}] P_k)@(d_1\sqcup c))$ \\

$\&_{\m{snd}}$ \> $\mb{proc}(y_\alpha[c], (x_\beta.k; P)@d_1) \ostep \mb{msg}( x_\beta.k; x_{\beta+1} \leftarrow x_\beta)\; \mb{proc}(y_{\alpha}[c], ([x_{\beta+1}/x_{\beta}]P)@d_1)$ \\ 

$\&_{\m{rcv}}$ \> $\mb{proc}(y_\alpha[c],(\mb{case}\, y_\alpha ( \ell \Rightarrow P_\ell)_{\ell \in L})@d_1) \; \mb{msg}(y_\alpha.k; v_\delta \leftarrow y_\alpha) \ostep \mb{proc}(v_{\delta}[c],([v_{\delta}/y_{\alpha}] P_k)@c)$ \\

$\otimes_{\m{snd}}$ \> $\mb{proc}(y_\alpha[c],(\mathbf{send}\,x_\beta\,y_\alpha; P)@d_1) \ostep \mb{proc}(y_{\alpha+1}[c], ([y_{\alpha+1}/y_{\alpha}]P)@d_1)\; \mb{msg}( \mathbf{send}\,x_\beta\,y_\alpha; y_\alpha \leftarrow y_{\alpha+1})$ \\

$\otimes_{\m{rcv}}$ \> $\mb{msg}(\mathbf{send}\,x_\beta\,y_\alpha;y_{\alpha}\leftarrow v_\delta) \; \mb{proc}(u_\gamma[{c'}],(w_\eta\leftarrow \mathbf{recv}\,y_\alpha; P)@d_1) \ostep \mb{proc}(u_{\gamma}[{c'}], ([x_\beta/w_\eta][v_{\delta}/y_{\alpha}] P)@(d_1\sqcup c))$ \\

$\multimap_{\m{snd}}$ \> $\mb{proc}(y_\alpha[c],(\mathbf{send}\,x_\beta\,u_\gamma; P)@d_1) \ostep \mb{msg}( \mathbf{send}\,x_\beta\,u_\gamma; u_{\gamma+1}\leftarrow u_\gamma)\; \mb{proc}(y_{\alpha}[c], ([u_{\gamma+1}/u_{\gamma}]P)@d_1)$ \\

$\multimap_{\m{rcv}}$ \> $\mb{proc}(y_\alpha[c],(w_\eta\leftarrow \mathbf{recv}\,y_\alpha; P)@d_1) \; \mb{msg}(\mathbf{send}\,x_\beta\,y_\alpha;v_{\delta}\,\leftarrow y_\alpha) \ostep \mb{proc}(v_\delta[c], ([x_\beta/w_\eta][v_{\delta}/y_{\alpha}] P)@ c)$
\end{tabbing}
\caption{Asynchronous dynamics of $\lang$.}
\label{fig:dynamics}
\end{small}
\end{center}
\end{figure*}

The rules in \Cref{fig:dynamics} achieve non-blocking sends by spawning off outputted messages as special message processes of the form {\small$\mb{msg}(\_; y_\alpha \leftarrow y_{\alpha+1})$} or {\small$\mb{msg}(\_; y_{\alpha+1} \leftarrow y_\alpha)$} where {\small$\_$} stands for the message being sent.
For example, in case of {\small$\otimes_{\m{snd}}$}, the message is {\small$\mathbf{send}\,x_\beta\,y_\alpha$}, indicating that the channel {\small$x_\beta$} is sent over channel {\small$y_\alpha$}.
Each spawning of such a message results in the allocation of a new generation {\small$y_{\alpha+1}$} of the carrier channel {\small$y_\alpha$}.
To guarantee proper sequencing between the message and the sender's continuation, a forward is included in the message that links the two generations {\small$y_{\alpha+1}$} and {\small$y_\alpha$} appropriately.
The direction of the forward is determined by the polarity of the session type.

Rule {\small$\msc{Spawn}$} relies on the substitution mapping {\small$\gamma$} given by the typing derivation and its lifting {\small$\hat{\gamma}$} to the process term level.
It looks up the definition of process {\small$X$} in the signature and instantiates the secrecy variables occurring in the process body using {\small$\gamma$}.
The condition {\small$\hat{\gamma}(\Psi')=\Psi_0$} ensures that all secrecy variables are being instantiated with a concrete secrecy level.
The rule also allocates a fresh channel {\small$x_0$} at generation 0, which is substituted for {\small$x$}.

\subsection{Configuration typing.}

Similar to~\cite{DerakhshanLICS2021}, we allow configurations of processes to be \emph{open}.
An open configuration can be \emph{plugged} with other process configurations to form a closed program.
Open configurations are typed with the judgment {\small$\Psi_0;\Delta\Vdash \mc{C}:: \Delta'$}, indicating that the configuration {\small$\mc{C}$} provides sessions along the channels in {\small$\Delta'$}, using sessions provided along channels in {\small$\Delta$} and given the concrete security lattice {\small$\Psi_0$}.
{\small$\Delta$} and {\small$\Delta'$} are both linear contexts, consisting of actual run-time channels of the form {\small$y_i{:}B_i\maxsec{[d_i']}$}.
For brevity, we do not indicate a channel's generation.
The typing rules are shown below.

{\small
\[
\inferrule*[right=$\mathbf{emp}_1$]
{\strut}
{\Psi_0; x{:}A[d] \Vdash \cdot :: (x{:}A[d])}
\qquad
\inferrule*[right=$\mathbf{emp}_2$]
{\strut}
{\Psi_0; \cdot \Vdash \cdot :: (\cdot)}
\]
\[
\inferrule*[right=$\mathbf{proc}$]
{\Psi_0 \Vdash d_1 \sqsubseteq d \\
\forall y{:}B[d'] \in \Delta'_0, \Delta \, (\Psi_0 \Vdash d' \sqsubseteq d) \\\\
\Psi_0; \Delta_0 \Vdash \mathcal{C}:: \Delta \\
\Psi_0; \Delta'_0, \Delta \vdash P@d_1:: (x{:}A[d])}
{\Psi_0; \Delta_0, \Delta'_0 \Vdash \mathcal{C}, \mathbf{proc}(x[d], P@d_1):: (x{:}A[d])}
\]
\[
\inferrule*[right=$\mathbf{msg}$]
{\forall y{:}B[d'] \in \Delta'_0, \Delta \, (\Psi_0 \Vdash d' \sqsubseteq d)\\
\Psi_0; \Delta_0 \Vdash \mathcal{C}:: \Delta \\
\Psi_0; \Delta'_0, \Delta \vdash P@d:: (x{:}A[d])}
{\Psi_0; \Delta_0, \Delta'_0 \Vdash \mathcal{C}, \mathbf{msg}(P):: (x{:}A[d])}
\]
\[
\inferrule*[right=$\mathbf{comp}$]
{\Psi_0; \Delta_0 \Vdash \mathcal{C}:: \Delta \\
\Psi_0; \Delta'_0 \Vdash \mc{C}_1:: x{:}A[d]}
{\Psi_0; \Delta_0, \Delta'_0 \Vdash \mathcal{C}, \mc{C}_1 :: \Delta, x{:} A[d]}
\]}

The dynamics steps an open configuration {\small$\Psi_0;\Delta\Vdash \mc{C}:: \Delta'$} to {\small$\Psi;\Delta\Vdash \mc{C}':: \Delta'$} by applying the rewriting {\small$\mc{C}_1 \ostepp \mc{C}_2$} to {\small$\mc{C}$}. 
Transitions happen entirely within {\small$\mc{C}$}, which is conveyed by the fact that the used and provided channels {\small$\Delta$} and {\small$\Delta'$} remain unchanged.
This is also reflected by the subscript {\small$\Delta \Vdash \Delta'$} of {\small$\mapsto_{\Delta \Vdash \Delta'}$}.
\section{Noninterference recursive logical relation}\label{sec:4.logical-relation}

This section develops the logical relation used to prove progress-sensitive noninterference for $\lang$.
Transitivity and adequacy of the relation are shown in \Cref{sec:5.metatheory}.

\subsection{Challenges}\label{sec:4.logical-relation:challenges}

\emph{Noninterference} essentially amounts to a \emph{program equivalence up to} the secrecy level {\small$\xi$} of the observer, requiring that two runs of a program can only differ in outcomes of secrecy level higher than or incomparable to {\small$\xi$}.
In a functional setting, the fundamental theorem of the \emph{logical relation} for equivalence is stated for two open expressions, showing that they evaluate to related values, if given related substitutions.
In a session type setting, on the other hand, the fundamental theorem is stated by \emph{plugging} an open configuration of processes (``open programs'') with closing configurations, showing that the programs send related messages, if receiving related messages from the configurations as well~\cite{DerakhshanLICS2021}.

The \emph{session logical relation} for noninterference developed by \citet{DerakhshanLICS2021}, while serving as a baseline for our development, does not scale to our setting.
It is phrased for the terminating language $\oldlang$ and its statement intimately relies on this assertion.
To support \emph{recursive types}, and thus \emph{non-terminating} programs, we had to address the following challenges:

\begin{itemize}

\item \emph{Well-foundedness:} Recursive types mandate use of a measure like step-indexing or later modalities~\cite{AppelMcAllesterTOPLAS2001,AhmedESOP2006,DreyerLICS2009} to keep the logical relation well-founded.
A naive adoption, typically tied to the number of unfoldings or computation steps, however, is too crude for a progress-sensitive statement of noninterference for message passing and challenge transitivity.

\item \emph{Nondeterminism:} While linear session-typed programs enjoy confluence, processes run concurrently.
As a result, two runs of a program cannot be aligned in lockstep, and the logical relation has to accommodate messages that are not being sent simultaneously while ensuring their relatedness.

\item \emph{Non-termination:} Noninterference allows two program runs to execute different code, as long as the same observable outcomes are produced.
In the presence of recursive types, any such high-secrecy code includes divergence for either of the two runs.
This discrepancy complicates a progress-sensitive statement, which demands that both runs will either diverge or not.

\end{itemize}

\subsection{Logical relation}\label{sec:4.logical-relation:logical-relation}

In line with classical logical relation developments for functional and imperative languages, we phrase our logical relation in terms of a value {\small$\mc{V} \llbracket \tau \rrbracket$} and term {\small$\mc{E} \llbracket \tau \rrbracket$} interpretation at a type {\small$\tau$}.
These notions have to be transposed to the session-typed, message-passing setting.
Prior work~\cite{PerezESOP2012,CairesESOP2013,PerezARTICLE2014,DeYoungFSCD2020,DerakhshanLICS2021} has done that for terminating session type languages.
We develop a logical relation for \emph{recursive} session types.
Using \Cref{fig:slrni-schema} as a guide, we first discuss the basic schema of the logical relation and then address the challenges identified above.

\begin{figure*}
    \centering
    \includegraphics[width=0.75\textwidth]{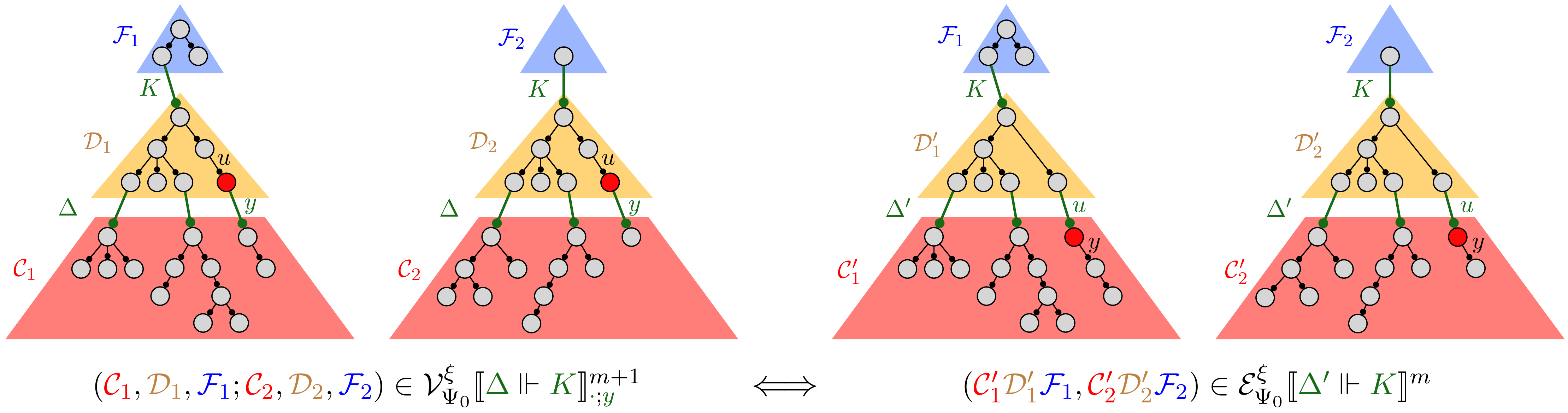}
    \caption{Schema of recursive session logical relation.}
    \label{fig:slrni-schema}
\end{figure*}

\subsubsection{Schema}

\Cref{fig:slrni-schema} distills the schema underlying the value interpretation of our logical relation, serving as a blueprint for the actual value interpretation defined in \Cref{fig:rel_value_right} and \Cref{fig:rel_value_left}.
On the left, it depicts two open programs {\small$\mnode_1$} and {\small$\mnode_2$}---the two runs of the program under consideration---with closing configurations (``substitutions'') {\small$\substc_1$}, {\small$\substf_1$} and {\small$\substc_2$}, {\small$\substf_2$}, \resp being related by the value interpretation.
This is formally stated as

\begin{center}
\begin{small}
$(\lr{\mc{C}_1} {\mc{D}_1} {\mc{F}_1};\lr{\mc{C}_2} {\mc{D}_2} {\mc{F}_2}) \in \mc{V}^\xi_{\Psi_0}\llbracket {\color{mygreen}\Delta} \Vdash {\color{mygreen}K}\rrbracket _{{\color{mygreen}\cdot}; {\color{mygreen}y}}^{m+1}$
\end{small}
\end{center}

\noindent where {\small${\color{mygreen}\Delta}$} amounts to the typing of channels connecting {\small$\mnode_1$} and {\small$\mnode_2$} with {\small$\substc_1$} and {\small$\substc_2$}, \resp and {\small${\color{mygreen}K}$} to the typing of the channel connecting {\small$\mnode_1$} and {\small$\mnode_2$} with {\small$\substf_1$} and {\small$\substf_2$}, resp.
We refer to {\small${\color{mygreen}\Delta} \Vdash {\color{mygreen}K}$} as the \emph{interface} of {\small$\mnode_1$} and {\small$\mnode_2$}, comprising the channels along which the two program runs interact with their closing configurations---including a potential attacker.
Superscript {\small$\xi$} and subscript {\small$\Psi_0$} of {\small$\mc{V}$} denote the secrecy level of the observer and the globally fixed security lattice, resp.
We defer an explanation of superscript {\small$m+1$} and subscript {\small${\color{mygreen}\cdot}; {\color{mygreen}y}$} of the index {\small$\llbracket{\color{mygreen}\Delta} \Vdash {\color{mygreen}K}\rrbracket$} until we address well-foundedness and nondeterminism, resp.

What are \emph{values} in a functional or imperative setting are open configurations in a session type setting that are \emph{ready to send} a message along their external interface.
For example, program runs {\small$\mnode_1$} and {\small$\mnode_2$} have already spawned off two messages, depicted in red in \Cref{fig:slrni-schema}, ready to be transmitted over interface channel {\small${\color{mygreen}y}$}.
The logical relation now mandates that the messages are related and pushes them into the substitutions {\small$\substc_1$} and {\small$\substc_2$}, yielding {\small$\mnode_1'$}, {\small$\mnode_2'$}, {\small$\substc_1'$}, and {\small$\substc_2'$} shown on the right in \Cref{fig:slrni-schema}.
{\small$\mnode_1'$}, {\small$\mnode_2'$}, {\small$\substc_1'$}, and {\small$\substc_2'$} can now each step internally, for example to consume the incoming messages, requiring them to be in the term interpretation

\begin{center}
\begin{small}
$(\lre{\mc{C}_1'} {\mc{D}_1'} {\mc{F}_1},\lre{\mc{C}_2'} {\mc{D}_2'} {\mc{F}_2}) \in \mc{E}^\xi_{ \Psi_0}\llbracket {\color{mygreen}\Delta'} \Vdash {\color{mygreen}K}\rrbracket ^{m}$
\end{small}
\end{center}

\noindent We defer a detailed discussion of the term interpretation defined in \Cref{fig:rel_term} for now but note the change of the interface from {\small${\color{mygreen}\Delta} \Vdash {\color{mygreen}K}$} to {\small${\color{mygreen}\Delta'} \Vdash {\color{mygreen}K}$} due to pushing the message into the substitution, replacing {\small${\color{mygreen}y}$} with the new interface channel {\small${\color{mygreen}u}$}.

The value interpretation of the logical relation accounts for every message sent from program runs {\small$\mnode_1$} and {\small$\mnode_2$} to substitutions {\small$\substc_1$}, {\small$\substf_1$} and {\small$\substc_2$}, {\small$\substf_2$}, \resp and vice versa.
This amounts to two cases per connective, one for a message exchanged along a channel in {\small${\color{mygreen}K}$} and one for a message exchanged along a channel in {\small${\color{mygreen}\Delta}$}.
We refer to the former as communications to the \emph{right} and the latter as communications to the \emph{left}.
Right communications are defined in \Cref{fig:rel_value_right}, left communications in \Cref{fig:rel_value_left}.
The predicate {\small$(\lr{\mc{C}_1}{\mc{D}_1}{\mc{F}_1}; \lr{\mc{C}_2}{\mc{D}_2}{\mc{F}_2})\in \m{Tree}_{\Psi_0}({\color{mygreen}\Delta} \Vdash {\color{mygreen}K})$} indicates that the runs {\small$\mnode_1$} and {\small$\mnode_2$} are well-typed and have matching well-typed substitutions {\small$\substc_1$}, {\small$\substf_1$} and {\small$\substc_2$}, {\small$\substf_2$}, resp.

The definitions showcase \emph{extensionality} of the logical relation: \emph{positive} connectives \emph{assert} the sending of related messages when communicating on the \emph{right} and \emph{assume} receipt of related messages when communicating on the \emph{left}.
Conversely, \emph{negative} connectives \emph{assume} receipt of related messages when communicating on the \emph{right} and \emph{assert} the sending of related messages when communicating on the \emph{left}.
We note that \emph{relatedness} of messages means for ground types ({\small$1$}, {\small$\intchoice$}, and {\small$\extchoice$}) that the same messages are being exchanged.
For higher-order types ({\small$\chanout$} and {\small$\chanin$}), however, whose messages transport entire subtrees, relatedness is more subtle.
In particular when the subtrees {\small$\mtree_1$} and {\small$\mtree_2$} originate from the program runs {\small$\mnode_1$} and {\small$\mnode_2$} (case {\small$\chanout$} in \Cref{fig:rel_value_right} and {\small$\chanin$} in \Cref{fig:rel_value_left}), \resp {\small$\mtree_1$} and {\small$\mtree_2$} must be shown to be related by the term interpretation of the logical relation.
\Cref{fig:rel_value_right} and \Cref{fig:rel_value_left} also include cases for forward, even though forwards do not result in a message exchange.
These cases make sure to update the names of channels in the interface due to a forward, guaranteeing that they are the same in both program runs.
Another subtle point to note is the use of {\small$\alpha$}-variance to match up continuation channels of messages and root channels of sent subtrees, ensuring that subsequent messages produced along those channels continue to be related.
For {\small$\chanout$} in \Cref{fig:rel_value_right}, for example, the value interpretation insists that the continuation channel {\small$u^c_\delta$} and root channel {\small$x^c_\beta$} are the same in both messages.
The superscript $c$ of a channel denotes its maximal secrecy.  While not required, we include it for convenience in our formal development.

\begin{figure*}
\begin{small}
\[\begin{array}{lcl}
({\lr{\cdot} {\mc{D}_1} {\mc{F}_1};{\lr{\cdot} {\mc{D}_1} {\mc{F}_1}}})\in \mc{V}^\xi_{ \Psi_0}\llbracket \cdot \Vdash y_\alpha{:}1[c]\rrbracket_{\cdot;y^\alpha}^{m+1} &
\m{iff} &
(\lr{\cdot} {\mc{D}_1} {\mc{F}_1};\lr{\cdot} {\mc{D}_2} {\mc{F}_2}) \in \m{Tree}_{\Psi_0}(\cdot \Vdash y_\alpha)\, \m{and} \,
\mc{D}_1=\mathbf{msg}(\mathbf{close}\,y_\alpha^c ) \, \m{and}\, \mc{D}_2=\mathbf{msg}( \mathbf{close}\,y_\alpha^c)\\[4pt]
 
(\lr{\mc{C}_1} {\mc{D}_1} {\mc{F}_1};\lr{\mc{C}_2} {\mc{D}_2} {\mc{F}_2})\in &
\m{iff} &
(\lr{\mc{C}_1} {\mc{D}_1} {\mc{F}_1};\lr{\mc{C}_2} {\mc{D}_2} {\mc{F}_2}) \in \m{Tree}_{\Psi_0}(\Delta \Vdash y_\alpha:\oplus\{\ell{:}A_\ell\}_{\ell \in I}[c])\, \m{and} \\
\mathcal{V}^\xi_{ \Psi_0}\llbracket (\Delta \Vdash y_\alpha{:}\oplus\{\ell{:}A_\ell\}_{\ell \in I}[c])\rrbracket_{\cdot;y^\alpha}^{m+1} & & \mathcal{D}_1=\mathcal{D}'_1\mathbf{msg}( y_\alpha^c.k;y_\alpha^c \leftarrow u^c_\delta ) \, \m{and}\, \mathcal{D}_2=\mathcal{D}'_2 \mathbf{msg}(y_\alpha^c.k; y_\alpha^c \leftarrow u^c_\delta )\, \m{and} \\
&& (\lre{\mc{C}_1} {\mc{D}'_1} {\mathbf{msg}(y_\alpha^c.k;y_\alpha^c \leftarrow u^c_\delta  )\mc{F}_1}, \lre{\mc{C}_2} {\mc{D}'_2} {\mathbf{msg}(y_\alpha^c.k;y_\alpha^c \leftarrow u^c_\delta  )\mc{F}_2}) \in  \mathcal{E}^\xi_{ \Psi_0}\llbracket \Delta \Vdash u_{\delta}{:}A_k[c]\rrbracket^{m}   \\[4pt]
   
(\lr{\mc{C}_1} {\mc{D}_1} {\mc{F}_1};\lr{\mc{C}_2} {\mc{D}_2} {\mc{F}_2}) \in &
\m{iff} &
(\lr{\mc{C}_1} {\mc{D}_1} {\mc{F}_1};\lr{\mc{C}_2} {\mc{D}_2} {\mc{F}_2}) \in \m{Tree}_{\Psi_0}(\Delta \Vdash y_\alpha{:}\&\{\ell{:}A_\ell\}_{\ell \in I}[c])\,\m{and}\\
\mathcal{V}^\xi_{ \Psi_0}\llbracket \Delta \Vdash y_\alpha{:}\&\{\ell{:}A_\ell\}_{\ell \in I}[c]\rrbracket_{y^\alpha_{\mc{F}}; \cdot}^{m+1} &&
\mathcal{F}_1=\mathbf{msg}( y_\alpha^c.k; u_\delta^c \leftarrow y_\alpha^c ) \mathcal{F}'_1 \, \m{and}\,\m{if} \, \mathcal{F}_2=\mathbf{msg}( y_\alpha^c.k;u_\delta^c \leftarrow y_\alpha^c)\mathcal{F}'_2 \, \m{then} \\
&& (\lre{\mc{C}_1} {\mc{D}_1 \mathbf{msg}( y_\alpha^c.k;u_\delta^c \leftarrow y_\alpha^c)} {\mc{F}'_1},\lre{\mc{C}_2} {\mc{D}_2\mathbf{msg}( y_\alpha^c.k;u_\delta^c \leftarrow y_\alpha^c)} {\mc{F}'_2})\in  \mathcal{E}^\xi_{\Psi_0}\llbracket \Delta \Vdash u_{\delta}{:}A_k[c]\rrbracket^{m}    \\[4pt]
   
(\lr{\mc{C}'_1\mc{C}''_1} {\mc{D}_1} {\mc{F}_1};\lr{\mc{C}'_2\mc{C}''_2} {\mc{D}_2} {\mc{F}_2})  \in&
\m{iff} &
\mc{C}'_i \in \m{Tree}_{\Psi_0}(\cdot \Vdash \Delta')\,\m{and}\, (\lr{\mc{C}'_1\mc{C}''_1} {\mc{D}_1} {\mc{F}_1};\lr{\mc{C}'_2\mc{C}''_2} {\mc{D}_2} {\mc{F}_2}) \in \m{Tree}_{\Psi_0}(\Delta', \Delta'' \Vdash y_\alpha{:}A\otimes B[c])\\
\mathcal{V}^\xi_{ \Psi_0}\llbracket \Delta',\Delta'' \Vdash y_\alpha{:}A\otimes B[c]\rrbracket_{\cdot; y^\alpha}^{m+1} & &
\m{and}\, \mathcal{D}_1=\mathcal{D}'_1\mathcal{T}_1\mathbf{msg}( \mb{send}\,x_\beta^c\,y_\alpha^c; y_\alpha^c \leftarrow u^c_\delta )\, \m{for}\, \mc{T}_1 \in \m{Tree}_{\Psi_0}(\Delta'' \Vdash x_\beta{:}A[c])\, \m{and}\\
&& \mathcal{D}_2=\mathcal{D}'_2\mc{T}_2 \mathbf{msg}( \mb{send}\,x_\beta^c\,y_\alpha^c; y_\alpha^c \leftarrow u^c_\delta )\,\m{for}\, \mc{T}_2 \in \m{Tree}_{\Psi_0}(\Delta'' \Vdash x_\beta{:}A[c])\, \m{and}\\

&& (\lre{\mc{C}''_1} {\mc{T}_1} {\mc{C}'_1 \mc{D}'_1 \mathbf{msg}( \mb{send}\,x_\beta^c\,y_\alpha^c; y_\alpha^c \leftarrow u^c_\delta )\mc{F}_1},\\
&& \lre{\mc{C}''_2} {\mc{T}_2} {\mc{C}'_2\mc{D}'_2 \mathbf{msg}( \mb{send}\,x_\beta^c\,y_\alpha^c; y_\alpha^c \leftarrow u^c_\delta )\mc{F}_2})  \in  \mathcal{E}^\xi_{ \Psi_0}\llbracket \Delta'' \Vdash x_{\beta}{:}A[c]\rrbracket^{m}  \,\m{and}\\
&& (\lre{\mc{C}'_1} {\mc{D}'_1} {\mc{C}''_1\mc{T}_1 \mathbf{msg}( \mb{send}\,x_\beta^c\, y_\alpha^c; y_\alpha^c \leftarrow u^c_\delta )\mc{F}_1},\\
&& \lre{\mc{C}'_2} {\mc{D}'_2} {\mc{C}''_2\mc{T}_2 \mathbf{msg}( \mb{send}\,x_\beta^c\,y_\alpha^c; y_\alpha^c \leftarrow u^c_\delta )\mc{F}_2})  \in  \mathcal{E}^\xi_{ \Psi_0}\llbracket \Delta' \Vdash u_{\delta}{:}B[c]\rrbracket^{m}  \\[4pt]
  
(\lr{\mc{C}_1} {\mc{D}_1} {\mc{F}_1};\lr{\mc{C}_2} {\mc{D}_2} {\mc{F}_2})  \in &
\m{iff} &
(\lr{\mc{C}_1} {\mc{D}_1} {\mc{F}_1};\lr{\mc{C}_2} {\mc{D}_2} {\mc{F}_2}) \in \m{Tree}_{\Psi_0}(\Delta \Vdash y_\alpha{:}A\multimap B[c])\, \m{and}\\
\mathcal{V}^\xi_{ \Psi_0}\llbracket \Delta \Vdash y_\alpha{:}A\multimap B[c]\rrbracket_{y^\alpha_{\mc{F}}; \cdot}^{m+1} &&
\mathcal{F}_1=\mathcal{T}_1\mathbf{msg}( \mb{send}\,x_\beta^c\,y_\alpha^c;u_\delta^c \leftarrow y_\alpha^c)\mathcal{F}'_1\,\m{for}\, \mc{T}_1 \in \m{Tree}_{\Psi_0}(\cdot \Vdash x_\beta{:}A[c])\, \m{and}\,\m{if} \\
&&\mathcal{F}_2=\mathcal{T}_2\mathbf{msg}( \mb{send}\,x_\beta^c\,y_\alpha^c;u_\delta^c \leftarrow y_\alpha^c)\mathcal{F}'_2\,\m{for}\, \mc{T}_2 \in \m{Tree}_{\Psi_0}(\cdot \Vdash x_\beta{:}A[c])\,\m{then}\,\\
&& (\lre{\mc{C}_1\mc{T}_1} {\mc{D}_1\mathbf{msg}( \mb{send}\,x_\beta^c\,y_\alpha^c;u_\delta^c \leftarrow y_\alpha^c)} {\mc{F}'_1},\\
&& \lre{\mc{C}_2\mc{T}_2} {\mc{D}_2\mathbf{msg}(\mb{send}\,x_\beta^c\,y_\alpha^c;u_\delta^c \leftarrow y_\alpha^c)} {\mc{F}'_2})  \in  \mathcal{E}^\xi_{ \Psi_0}\llbracket \Delta, x_\beta{:}A[c]\Vdash u_{\delta}{:}B[c]\rrbracket^{m}  \\[4pt]
   
({\lr{\mc{C}_1} {\mc{D}_1} {\mc{F}_1};{\lr{\mc{C}_2} {\mc{D}_2} {\mc{F}_2}}}) \in &
\m{iff}&
(\lr{\mc{C}_1} {\mc{D}_1} {\mc{F}_1};\lr{\mc{C}_2} {\mc{D}_2} {\mc{F}_2}) \in \m{Tree}_{\Psi_0}(\Delta \Vdash y_\alpha{:}A[c])\,\m{and}\\
\mc{V}^\xi_{ \Psi_0}\llbracket  \Delta \Vdash y_\alpha{:}A[c] \rrbracket_{\cdot;y^\alpha}^{m+1} &&
\mc{D}_1=\mc{D}'_1\mathbf{proc}(y_\alpha^c, y_\alpha^c\leftarrow x_\beta^c @d_1) \, \m{and}\,  \mc{D}_2=\mc{D}'_2\mathbf{proc}(y_\alpha^c, y_\alpha^c\leftarrow x_\beta^c @d_2)\,\m{and}\\
&& ([x_{\beta}^c/y_\alpha^c]{\lre{\mc{C}_1} {\mc{D}'_1} {\mc{F}_1},[x_{\beta}^c/y_\alpha^c]{\lre{\mc{C}_2} {\mc{D}'_2} {\mc{F}_2}}}) \in \mathcal{E}^\xi_{ \Psi_0}\llbracket \Delta\Vdash x_{\beta}{:}A[c] \rrbracket^{m}
   \end{array}\]%
\end{small}
\caption{Value interpretation of logical relation for \emph{right} communications.}
\label{fig:rel_value_right}
\end{figure*}

\begin{figure*}
\begin{small}
\begin{mathpar}
\begin{array}{lcl}
(\lr{\mc{C}_1} {\mc{F}_1}{\mc{D}_1}; \lr{\mc{C}_2} {\mc{F}_2}{\mc{D}_2})\in &
\m{iff} &
(\lr{\mc{C}_1} {\mc{F}_1}{\mc{D}_1}; \lr{\mc{C}_2} {\mc{F}_2}{\mc{D}_2}) \in \m{Tree}_{\Psi_0}(\Delta, y_\alpha{:}1[c] \Vdash K)\, \m{and} \\
\mc{V}^\xi_{ \Psi_0}\llbracket \Delta, y_\alpha{:}1[c] \Vdash K\rrbracket_{y^\alpha_{\mc{C}}; \cdot}^{m+1} &&
\mc{C}_1=\mathcal{C}'_1\mathbf{msg}( \mathbf{close}\,y_\alpha^c) \, \m{and}\, \m{if}\, \mc{C}_2=\mathcal{C}'_2\mathbf{msg}( \mathbf{close}\,y_\alpha^c)\, \m{then} \\
&&(\lre{\mc{C}'_1} {\mathbf{msg}( \mathbf{close}\,y_\alpha^c)\mc{D}_1} {\mc{F}_1},\lre{\mc{C}'_2} {\mathbf{msg}( \mathbf{close}\,y_\alpha^c )\mc{D}_2} {\mc{F}_2})\in \mathcal{E}^\xi_{\Psi_0}\llbracket \Delta \Vdash K\rrbracket^{m}    \\[4pt]

(\lr{\mc{C}_1} {\mc{D}_1} {\mc{F}_1};\lr{\mc{C}_2} {\mc{D}_2} {\mc{F}_2}) \in &
\m{iff} &
(\lr{\mc{C}_1} {\mc{D}_1} {\mc{F}_1};\lr{\mc{C}_2} {\mc{D}_2} {\mc{F}_2}) \in \m{Tree}_{\Psi_0}(\Delta, y_\alpha{:}\oplus\{\ell{:}A_\ell\}_{\ell \in I}[c] \Vdash K)\, \m{and}\\
\mathcal{V}^\xi_{ \Psi_0}\llbracket \Delta, y_\alpha:\oplus\{\ell{:}A_\ell\}_{\ell \in I}[c] \Vdash K\rrbracket_{y^\alpha_{\mc{C}}; \cdot}^{m+1}
&&\mathcal{C}_1=\mathcal{C}'_1\mathbf{msg}( y_\alpha^c.k; y_\alpha^c \leftarrow u^c_\delta) \, \m{and}\, \m{if}\, \mathcal{C}_2=\mathcal{C}'_2 \mathbf{msg}(y_\alpha^c.k;y_\alpha^c \leftarrow u^c_\delta )\, \m{then}\\
&& (\lre{\mc{C}'_1} {\mathbf{msg}(y_\alpha^c.k; y_\alpha^c \leftarrow u^c_\delta)\mc{D}_1} {\mc{F}_1},\lre{\mc{C}'_2} {\mathbf{msg}(y_\alpha^c.k; y_\alpha^c \leftarrow u^c_\delta)\mc{D}_2} {\mc{F}_2}) \in  \mathcal{E}^\xi_{ \Psi_0}\llbracket \Delta, u_{\delta}{:}A_k[c] \Vdash K \rrbracket^{m}    \\[4pt]
   
(\lr{\mc{C}_1} {\mc{D}_1} {\mc{F}_1};\lr{\mc{C}_2} {\mc{D}_2} {\mc{F}_2}) \in &
\m{iff} &
(\lr{\mc{C}_1} {\mc{D}_1} {\mc{F}_1};\lr{\mc{C}_2} {\mc{D}_2} {\mc{F}_2}) \in \m{Tree}_{\Psi_0}(\Delta, y_\alpha{:}\&\{\ell{:}A_\ell\}_{\ell \in I}[c] \Vdash K)\, \m{and}\\
\mathcal{V}^\xi_{ \Psi_0}\llbracket \Delta, y_\alpha{:}\&\{\ell{:}A_\ell\}_{\ell \in I}[c] \Vdash K \rrbracket_{\cdot; y^\alpha}^{m+1} &&\mathcal{D}_1=\mathbf{msg}( y_\alpha^c.k;u_\delta^c \leftarrow y_\alpha^c) \mathcal{D}'_1 \, \m{and}\ \mathcal{D}_2=\mathbf{msg}( y_\alpha^c.k;u_\delta^c \leftarrow y_\alpha^c)\mathcal{D}'_2\,\m{and} \\
&&(\lre{\mc{C}_1 \mathbf{msg}( y_\alpha^c.k;u_\delta^c \leftarrow y_\alpha^c)} {\mc{D}'_1} {\mc{F}_1},\lre{\mc{C}_2\mathbf{msg}( y_\alpha^c.k;u_\delta^c \leftarrow y_\alpha^c)} {\mc{D}'_2} {\mc{F}_2}) \in  \mathcal{E}^\xi_{ \Psi_0}\llbracket \Delta, u_{\delta}{:}A_k[c] \Vdash K \rrbracket^{m}    \\[4pt]
   
(\lr{\mc{C}_1} {\mc{D}_1} {\mc{F}_1};\lr{\mc{C}_2} {\mc{D}_2} {\mc{F}_2})  \in &
\m{iff} &
(\lr{\mc{C}_1} {\mc{D}_1} {\mc{F}_1};\lr{\mc{C}_2} {\mc{D}_2} {\mc{F}_2}) \in \m{Tree}_{\Psi_0}(\Delta, y_\alpha{:}A\otimes B[c] \Vdash K)\,\m{and} \\
\mathcal{V}^\xi_{ \Psi_0}\llbracket \Delta, y_\alpha{:}A\otimes B[c] \Vdash K \rrbracket_{y^\alpha_{\mc{C}}; \cdot}^{m+1}&&\mathcal{C}_1=\mathcal{C}'_1\mathbf{msg}( \mb{send}x_\beta^c\,y_\alpha^c ; y_\alpha^c \leftarrow u^c_\delta)\, \m{and}\,\m{if}\, \mathcal{C}_2=\mathcal{C}'_2 \mathbf{msg}( \mb{send}x_\beta^c\,y_\alpha^c ; y_\alpha^c \leftarrow u^c_\delta) \, \m{then}  \\
&&(\lre{\mc{C}'_1}{\mathbf{msg}( \mb{send}x_\beta^c,y_\alpha^c;y_\alpha^c \leftarrow u^c_\delta )\mc{D}_1} {\mc{F}_1},\\
&&\lre{\mc{C}'_2} {\mathbf{msg}( \mb{send}x_\beta^c,y_\alpha^c ; y_\alpha^c \leftarrow u^c_\delta )\mc{D}_2} {\mc{F}_2})  \in  \mathcal{E}^\xi_{ \Psi_0}\llbracket \Delta,x_\beta{:}A[c], u_{\delta}{:}B[c] \Vdash K \rrbracket^{m}  \\[4pt]
  
(\lr{\mc{C}'_1\mc{C}''_1} {\mc{D}_1} {\mc{F}_1};\lr{\mc{C}'_2\mc{C}''_2} {\mc{D}_2} {\mc{F}_2})  \in &
\m{iff} &
\mc{C}''_i \in \m{Tree}_{\Psi_0}(\cdot \Vdash \Delta')\,\m{and} \, (\lr{\mc{C}'_1\mc{C}''_1} {\mc{D}_1} {\mc{F}_1};\lr{\mc{C}'_2\mc{C}''_2} {\mc{D}_2} {\mc{F}_2}) \in \m{Tree}_{\Psi_0}(\Delta', \Delta'', y_\alpha{:}A\multimap B[c] \Vdash K)\\
\mathcal{V}^\xi_{\Psi_0}\llbracket \Delta', \Delta'', y_\alpha{:}A\multimap B[c] \Vdash K \rrbracket_{\cdot;y^\alpha}^{m+1}
&& \m{and}\,\mathcal{D}_1=\mathcal{T}_1\mathbf{msg}( \mb{send}x_\beta^c\,y_\alpha^c; u_\delta^c \leftarrow y_\alpha^c)\,\mathcal{D}''_1\, \m{and} \,\m{for}\, \mc{T}_1 \in \m{Tree}_{\Psi_0}(\Delta' \Vdash x_\beta{:}A[c])\\
&& \mathcal{D}_2=\mathcal{T}_2\mathbf{msg}( \mb{send}x_\beta^c\,y_\alpha^c;u_\delta^c \leftarrow y_\alpha^c)\,\mc{D}''_2\, \m{and}  \,\m{for}\, \mc{T}_2 \in \m{Tree}_{\Psi_0}(\Delta' \Vdash x_\beta{:}A[c])\,\m{and} \\
&& (\lre{\mc{C}''_1}{\mathcal{T}_1}{\mc{C}'_1\mathbf{msg}( \mb{send}x_\beta^c\,y_\alpha^c; u_\delta^c \leftarrow y_\alpha^c)\mc{D}''_1\mc{F}_1},\\
&& \lre{\mc{C}''_2}{\mathcal{T}_2}{\mc{C}'_2\mathbf{msg}( \mb{send}x_\beta^c\,y_\alpha^c; u_\delta^c \leftarrow y_\alpha^c) \mc{D}''_2 \mc{F}_2}) \in  \mathcal{E}^\xi_{ \Psi_0}\llbracket \Delta' \Vdash x_\beta{:}A[c] \rrbracket^{m} \,\m{and}\\
&& (\lre{\mc{C}'_1\mc{C}''_1\mathcal{T}_1\mathbf{msg}( \mb{send}x_\beta^c\,y_\alpha^c; u_\delta^c \leftarrow y_\alpha^c)} {\mc{D}''_1} {\mc{F}_1},\\
&& \lre{\mc{C}'_2\mc{C}''_2\mathcal{T}_2\mathbf{msg}( \mb{send}x_\beta^c\,y_\alpha^c; u_\delta^c \leftarrow y_\alpha^c)} {\mc{D}''_2} {\mc{F}_2}) \in  \mathcal{E}^\xi_{ \Psi_0}\llbracket \Delta'', u_{\delta}{:}B[c] \Vdash K \rrbracket^{m}  \\[4pt]

({\lr{\mc{C}_1} {\mc{D}_1} {\mc{F}_1};{\lr{\mc{C}_2} {\mc{D}_2} {\mc{F}_2}}}) \in &
\m{iff} &
(\lr{\mc{C}_1} {\mc{D}_1} {\mc{F}_1};\lr{\mc{C}_2} {\mc{D}_2} {\mc{F}_2}) \in \m{Tree}_{\Psi_0}(\Delta', y_\alpha{:}A[c] \Vdash K)\,\m{and}\\
\mc{V}^\xi_{ \Psi_0}\llbracket \Delta, y_\alpha{:}A[c] \Vdash K \rrbracket_{y^\alpha_{\mc{C}}; \cdot}^{m+1}
&& \mc{C}_1=\mc{C}'_1\mathbf{proc}(y_\alpha^c, y_\alpha^c\leftarrow x_\beta^c @d_1) \, \m{and}\, \m{if}\, \mc{C}_2=\mc{C}'_2\mathbf{proc}(y_\alpha^c, y_\alpha^c\leftarrow x_\beta^c @d_2) \, \m{then}\\ 
&& ([x_{\beta}^c/y_\alpha^c]{\lre{\mc{C}'_1} {\mc{D}_1} {\mc{F}_1},[x_{\beta}^c/y_\alpha^c]{\lre{\mc{C}'_2} {\mc{D}_2} {\mc{F}_2}}}) \in \mathcal{E}^\xi_{ \Psi_0}\llbracket \Delta, x_{\beta}{:}A[c] \Vdash [x_{\beta}^c/y_\alpha^c] K \rrbracket^{m}
\end{array}
\end{mathpar}
\end{small}
\caption{Value interpretation of logical relation for \emph{left} communications.}
\label{fig:rel_value_left}
\end{figure*}

We remark that the logical relation is defined for program runs {\small$\Psi_0; {\color{mygreen}\Delta} \Vdash {\color{brown}\mc{D}_1} :: {\color{mygreen}K}$} and {\small$\Psi_0; {\color{mygreen}\Delta} \Vdash {\color{brown}\mc{D}_2} :: {\color{mygreen}K}$} with corresponding substitutions {\small$\Psi_0; \cdot \Vdash {\color{red}\mc{C}_1} :: {\color{mygreen}\Delta}$},  {\small$\Psi_0; {\color{mygreen}K} \Vdash {\color{blue}\mc{F}_1} :: \cdot$}, {\small$\Psi_0; \cdot \Vdash {\color{red}\mc{C}_2} :: {\color{mygreen}\Delta}$,} and  {\small$\Psi_0; {\color{mygreen}K} \Vdash {\color{blue}\mc{F}_2} :: \cdot$} such that the channels in the interface {\small${\color{mygreen}\Delta} \Vdash {\color{mygreen}K}$} are \emph{{\color{mygreen}observable}}, requiring their maximal secrecy to be less than or equal to the secrecy level of the observer {\small$\xi$}.
This choice renders the definition of the logical relation more lightweight, avoiding casing on the maximal secrecy level of carrier channels in each case of the value interpretation.
To accommodate arbitrary runs {\small$\Psi_0; \Delta_3 \Vdash \mc{D}_3:: K_3$} and {\small$\Psi_0; \Delta_4 \Vdash \mc{D}_4:: K_4$} of a program with corresponding substitutions {\small$\Psi_0; \cdot \Vdash \mc{C}_3 :: \Delta_3$},  {\small$\Psi_0; K_3 \Vdash \mc{F}_3 :: \cdot$}, {\small$\Psi_0; \cdot \Vdash \mc{C}_4 :: \Delta_4$}, and {\small$\Psi_0; K_4 \Vdash \mc{F}_4 :: \cdot$} we define the projection {\small$\Downarrow \xi$} on a context, which filters out the channels with a maximal secrecy higher than or incomparable to {\small$\xi$}.
By fixing {\small$\Delta_3 \Downarrow \xi = \Delta_4 \Downarrow \xi = {\color{mygreen}\Delta'}$} and {\small$K_3\Downarrow \xi = K_4 \Downarrow \xi = {\color{mygreen}K'}$} we get {\small$\Psi_0; {\color{mygreen}\Delta'} \Vdash {\color{brown}\mc{D}_3'} :: {\color{mygreen}K'}$} and {\small$\Psi_0; {\color{mygreen}\Delta'} \Vdash {\color{brown}\mc{D}_4'} :: {\color{mygreen}K'}$} with corresponding substitutions {\small$\Psi_0; \cdot \Vdash {\color{red}\mc{C}_3'} :: {\color{mygreen}\Delta'}$},  {\small$\Psi_0; {\color{mygreen}K'} \Vdash {\color{blue}\mc{F}_3'} :: \cdot$}, {\small$\Psi_0; \cdot \Vdash {\color{red}\mc{C}_4'} :: {\color{mygreen}\Delta'}$}, and {\small$\Psi_0; {\color{mygreen}K'} \Vdash {\color{blue}\mc{F}_4'} :: \cdot$} where {\small${\color{brown}\mc{D}_3'}$} and {\small${\color{brown}\mc{D}_4'}$} internalize any subtrees in {\small$\mc{C}_3$}, {\small$\mc{F}_3$}, {\small$\mc{C}_4$}, and {\small$\mc{F}_4$} rooted at filtered out non-observable channels.
This projection step happens in the term interpretation shown in \Cref{fig:rel_term}, which we explore in more detail below.
The interested reader can consult Lemma~I.12 and Fig.~3 in the appendix to understand how {\small${\color{brown}\mc{D}_3'}$} and {\small${\color{brown}\mc{D}_4'}$} are built.
We remark that any interesting program will include branching on high-secrecy channels and thus will at least internalize the top substitution {\small$\substf_i$}, since the offering channel of the program {\small$\mnode_i$}will be of maximal secrecy {\small$\not\sqsubseteq \xi$}.

\subsubsection{Well-foundedness}

Because of recursive types, the logical relation can no longer be defined inductively by the multiset of the size of the types in its interface {\small${\color{mygreen}\Delta} \Vdash {\color{mygreen}K}$}.
To restore well-foundedness a measure like step-indexing~\cite{AppelMcAllesterTOPLAS2001,AhmedESOP2006} can be employed.
Step-indexing is typically tied to the number of type unfoldings or computation steps by which one of the two programs is bound.
A more natural and symmetric measure for our setting is the number of \emph{observations} that can be made along the interface {\small${\color{mygreen}\Delta} \Vdash {\color{mygreen}K}$} of the logical relation.
We thus bound the value and term interpretation of our logical relation by the number of observations and attach them as superscripts {\small$m$}, for {\small$m \geq 0$}, to the relation's interface {\small${\color{mygreen}\Delta} \Vdash {\color{mygreen}K}$}.
Since the two interpretations are mutually recursive, we define a second measure {\small$\mc{V} < \mc{E}$}, yielding a lexicographic order.
Given this lexicographic order, well-foundedness of our logical relation is immediate: \textit{(i)} the recursive invocations of {\small$\mc{E}$} in the value interpretation {\small$\mc{V}$} (see \Cref{fig:rel_value_right} and \Cref{fig:rel_value_left}) decrease in the first measure, the number of observations, and \textit{(ii)}  the recursive invocations of {\small$\mc{V}$} in the term interpretation {\small$\mc{E}$} (see \Cref{fig:rel_term}) remain the same for the first measure but decrease in the second measure.

\subsubsection{Nondeterminism}

It is now time to look at the definition of the term interpretation of our logical relation given in \Cref{fig:rel_term}.
The value interpretation calls the term interpretation on the updated configurations {\small$\lre{\mc{C}_1} {\mc{D}_1} {\mc{F}_1},\lre{\mc{C}_2} {\mc{D}_2} {\mc{F}_2}$}, after having asserted or assumed existence of related messages and having pushed these messages across the interface to the recipient.
The term interpretation now allows each subconfiguration {\small$\substc_1$}, {\small$\mnode_1$}, {\small$\substf_1$}, {\small$\substc_2$}, {\small$\mnode_2$}, and {\small$\substf_2$} to step internally until two configurations are reached where at least one message is ready to be transmitted across the same interface channel in both of the two configurations.
These configurations are then required to be in the value interpretation.

\Cref{fig:rel_term} makes this reasoning precise using the transition

\begin{center}
\begin{small}
$\lr{\mc{C}_i}{\mc{D}_i}{\mc{F}_i}\mapsto^{*_{\Theta; \Upsilon}}_{{\color{mygreen}\Delta} \Vdash {\color{mygreen}K}} \lr{\mc{C}'_i}{\mc{D}'_i}{\mc{F}'_i}$
\end{small}
\end{center}

\noindent which allows stepping a configuration by iterated application of the rewriting rules defined in \Cref{fig:dynamics} to each subconfiguration.
These steps are internal to each subconfiguration.
The star expresses that zero to multiple internal steps can be taken.
The superscripts {\small$\Theta; \Upsilon$} denote set of channels occurring in the interface {\small${\color{mygreen}\Delta} \Vdash {\color{mygreen}K}$} that have a message ready to be transported along the channel.
The set {\small$\Theta$} collects the \emph{incoming} channels, \ie channels that have messages ready  to be sent from {\small$\substc_i$} and {\small$\substf_i$} to {\small$\mnode_i$}, and the set {\small$\Upsilon$} collects the \emph{outgoing} channels, \ie channels that have messages ready to be sent from {\small$\mnode_i$} to {\small$\substc_i$} and {\small$\substf_i$}.

\begin{figure*}
\begin{small}
\[\begin{array}{lcl}
({\mc{B}_1}, {\mc{B}_2}) \in \mc{E}^\xi_{ \Psi_0}\llbracket \Delta \Vdash K \rrbracket^{m+1} &
\m{iff} &
\mc{B}_1=\mc{C}_1\mc{D}_1\mc{F}_1 \,\m{and}\,\mc{B}_2=\mc{C}_2\mc{D}_2\mc{F}_2\,\m{and} \,
(\lr{\mc{C}_1}{\mc{D}_1}{\mc{F}_1}; \lr{\mc{C}_2}{\mc{D}_2}{\mc{F}_2})\in \m{Tree}_{\Psi_0}(\Delta \Vdash K)\,\m{and}\\
&& \m{if}\; \lr{\mc{C}_1}{\mc{D}_1}{\mc{F}_1}\;\mapsto^{{*}_{\Theta; \Upsilon_1}}_{\Delta \Vdash K} \; \lr{\mc{C}_1}{\mc{D}'_1}{\mc{F}_1}\, \m{then}\,\exists \mc{D}'_2 \m{\,such \, that\,}\,\lr{\mc{C}_2}{\mc{D}_2}{\mc{F}_2}\mapsto^{{*}_{\Theta'; \Upsilon_1}}_{\Delta \Vdash K} \lr{\mc{C}_2}{\mc{D}'_2}{\mc{F}_2}, \; \m{and}\\[2pt]
&& \forall \mc{C}'_1,\mc{F}'_1,\mc{C}'_2, \mc{F}'_2\,\m{if}\, \lr{\mc{C}_1}{\mc{D}'_1}{\mc{F}_1}\mapsto^{{*}_{\Theta_1;\Upsilon_1}}_{\Delta \Vdash K} \lr{\mc{C}'_1}{\mc{D}'_1}{\mc{F}'_1}\, \m{and}\,\lr{\mc{C}_2}{\mc{D}'_2}{\mc{F}_2}\mapsto^{{*}_{\Theta_2;\Upsilon_1}}_{\Delta \Vdash K} \lr{\mc{C}'_2}{\mc{D}'_2}{\mc{F}'_2},\,\m{then}\\ 
&& \forall\, x \in \Upsilon_1.\, (\lr{\mc{C}'_1}{\mc{D}'_1}{\mc{F}'_1}; \lr{\mc{C}'_2}{\mc{D}'_2}{\mc{F}'_2})\in \mc{V}^\xi_{\Psi_0}\llbracket  \Delta \Vdash K \rrbracket_{\cdot;x}^{m+1}\,\m{and}\,\\[2pt]
&& \forall\, x \in (\Theta_1 \cap \Theta_2).\, (\lr{\mc{C}'_1}{\mc{D}'_1}{\mc{F}'_1}; \lr{\mc{C}'_2}{\mc{D}'_2}{\mc{F}'_2})\in \mc{V}^\xi_{\Psi_0}\llbracket  \Delta \Vdash K \rrbracket_{x;\cdot}^{m+1}\\[4pt]
  
({\mc{B}_1}, {\mc{B}_2}) \in \mc{E}^\xi_{ \Psi_0}\llbracket \Delta \Vdash K \rrbracket^{0} &
\m{iff}&
\mc{B}_1=\mc{C}_1\mc{D}_1\mc{F}_1 \,\m{and}\,\mc{B}_2=\mc{C}_2\mc{D}_2\mc{F}_2\,\m{and}\,(\lr{\mc{C}_1}{\mc{D}_1}{\mc{F}_1}; \lr{\mc{C}_2}{\mc{D}_2}{\mc{F}_2})\in \m{Tree}_{\Psi_0}(\Delta \Vdash K)
    \end{array}\]
\end{small}
\caption{Term interpretation of logical relation.}
\label{fig:rel_term}
\end{figure*}

The term interpretation uses the transition {\small$\lr{\mc{C}_i}{\mc{D}_i}{\mc{F}_i}\mapsto^{*_{\Theta; \Upsilon}}_{{\color{mygreen}\Delta} \Vdash {\color{mygreen}K}} \lr{\mc{C}'_i}{\mc{D}'_i}{\mc{F}'_i}$} in a nested way to account for the fact that messages may not be ready for transmission along the same channel simultaneously in both runs due to nondeterminism.
Assuming that {\small$\mnode_1$} steps a finite number of times {\small$\lr{\mc{C}_1}{\mc{D}_1}{\mc{F}_1}\;\mapsto^{{*}_{\Theta; \Upsilon_1}}_{\Delta \Vdash K} \; \lr{\mc{C}_1}{\mc{D}'_1}{\mc{F}_1}$}, generating the outgoing channels {\small$\Upsilon_1$}, {\small$\mnode_2$} must be stepped {\small$\lr{\mc{C}_2}{\mc{D}_2}{\mc{F}_2}\mapsto^{{*}_{\Theta'; \Upsilon_1}}_{\Delta \Vdash K} \lr{\mc{C}_2}{\mc{D}'_2}{\mc{F}_2}$} to produce the same set of outgoing channels {\small$\Upsilon_1$}.
To provide a witness for this existential in the proof of the fundamental theorem we use \Cref{lem:indinvariant1} that intimately relies on nondeterminism, allowing us to choose a schedule that yields the appropriate {\small$\mnode_2'$}.
The logical relation assumes that {\small$\substc_i$}, {\small$\substf_i$} can step arbitrarily {\small$\lr{\mc{C}_1}{\mc{D}'_1}{\mc{F}_1}\mapsto^{{*}_{\Theta_1;\Upsilon_1}}_{\Delta \Vdash K} \lr{\mc{C}'_1}{\mc{D}'_1}{\mc{F}'_1}$} and {\small$\lr{\mc{C}_2}{\mc{D}'_2}{\mc{F}_2}\mapsto^{{*}_{\Theta_2;\Upsilon_1}}_{\Delta \Vdash K} \lr{\mc{C}'_2}{\mc{D}'_2}{\mc{F}'_2}$}, generating the incoming channels {\small$\Theta_1$} and {\small$\Theta_2$}, resp.
Having generated the set of incoming channels {\small$\Theta_1$} and {\small$\Theta_2$} and outgoing channels {\small$\Upsilon_1$}, the term interpretation then calls the value interpretation on the resulting configurations {\small$\lre{\mc{C}_1'} {\mc{D}_1'} {\mc{F}_1'},\lre{\mc{C}_2'} {\mc{D}_2'} {\mc{F}_2'}$} \emph{for every} channel that has a message ready for transmission in both runs.
Non-productive runs are trivially accommodated by the term interpretation, by allowing the set of incoming and outgoing channels to be empty.

The value interpretation uses the subscripts {\small${\color{mygreen}x}; {\color{mygreen}\cdot}$} and {\small${\color{mygreen}\cdot}; {\color{mygreen}x}$} to indicate the incoming and outgoing channels, \resp in focus.
The introduction of a \emph{focus channel} is key to resolving nondeterminism present in the logical relation itself, justifying the use of an {\small$\m{iff}$} in \Cref{fig:rel_value_right} and \Cref{fig:rel_value_left}.
Some focus channels use the subscripts {\small$\mc{C}$} or {\small$\mc{F}$}, denoting the sender, information necessary to distinguish messages in case {\small$\mnode_i$} is empty.

We may wonder why the logical relation seems oblivious of recursive type unfoldings.
Since the relation is indexed by the number of external observations, it is invariant in type unfoldings.
For example, ${\small ({\mc{B}_1}, {\mc{B}_2}) \in \mc{E}^\xi_{ \Psi_0}\llbracket \Delta \Vdash x {:} \m{pin} [c] \rrbracket^{m+1}}$ holds for the same set of configuration pairs as ${\small ({\mc{B}_1}, {\mc{B}_2}) \in \mc{E}^\xi_{ \Psi_0}\llbracket \Delta \Vdash x {:} \oplus \{\mathit{tok_i}{:}\m{pin}\}_{i \leq n} [c] \rrbracket^{m+1}}$, and the value interpretation with focus channel $x$ is only triggered when a message along $x$ is ready (\ie token or forward).
Any recursive calls of the term interpretation upon consumption of the message in the value interpretation will then happen at index $m$.

\subsubsection{Non-termination}\label{sec:4.logical-relation:nontermination}

Non-termination complicates the proof of noninterference, if a \emph{progress-sensitive} statement is wished for.
The crux is to show that either both runs are observably productive or not.
This requirement is expressed by the term interpretation of the logical relation (see \Cref{fig:rel_term}), which demands stepping {\small$\mnode_2$} to produce the same set of outgoing channels as {\small$\mnode_1$}.
The challenge is aggravated by the fact that two runs of a program can differ after branching on a high-secrecy channel, including divergence for just one of them.
Given this discrepancy between executed high-secrecy code, it seems almost impossible to show that the same observable outcomes are produced nonetheless.

Let's remind ourselves that noninterference amounts to a program equivalence, equating the program to itself up to the secrecy level of the observer.
This means that the two program runs will initially be the same and only diverge upon branching on high-secrecy messages.
If we can identify, in the proof of the fundamental theorem, which parts of the two program runs align and which ones differ, we should be able to use this information to our advantage.
In addition, we can exploit the guarantee made by the type system that ``low-secrecy code'' cannot sequentially depend on ``high-secrecy code'' because flows from high to low are ruled out by typing.

To make this reasoning more tangible, we introduce the notion of a \emph{relevant node}.
A relevant node is a process or message in {\small$\mnode$} that can directly or indirectly influence what messages are made ready for transmission across the interface {\small${\color{mygreen}\Delta} \Vdash {\color{mygreen}K}$}.
Relevant nodes thus affect the \emph{observable outcome}.
\Cref{fig:relevant-nodes} illustrates the set of relevant nodes for a {\small$\mnode$} and the evolution of this set when {\small$\mnode$} makes an internal transition to {\small$\mnode'$}.
We provide the main intuitions here, the interested reader is deferred to Def.~C.6 in the appendix.


\begin{figure*}
\centering
\includegraphics[width=0.75\textwidth]{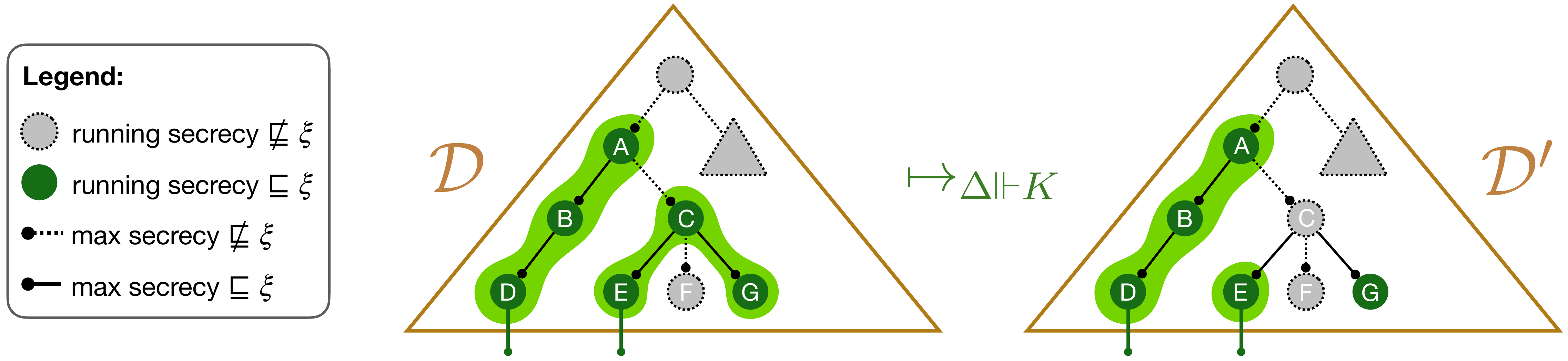}
\caption{Relevant nodes for program run $\mnode$ and successor $\mnode'$.}
\label{fig:relevant-nodes}
\end{figure*}

The relevant nodes in \Cref{fig:relevant-nodes} are highlighted in bright green and amount to the nodes {\small$\m{a}$}, {\small$\m{b}$}, {\small$\m{c}$}, {\small$\m{d}$}, {\small$\m{e}$}, and {\small$\m{g}$} in {\small$\mnode$} and to the nodes {\small$\m{a}$}, {\small$\m{b}$}, {\small$\m{d}$}, and {\small$\m{e}$} in {\small$\mnode'$}.
To identify the observable nodes, we must consider the \emph{running secrecy} of a node and the \emph{maximal secrecy} of its offering channel.
Nodes with a running secrecy {\small$\sqsubseteq \xi$} are displayed  as green solid circles and nodes with a running secrecy {\small$\not \sqsubseteq \xi$} as gray dotted circles.
Channels with a maximal secrecy {\small$\sqsubseteq \xi$} are displayed as black solid lines and channels with a maximal secrecy {\small$\not \sqsubseteq \xi$} as dotted black lines.
The observable nodes can be found by a bottom-up traversal, starting with the observable channels in the interface {\small${\color{mygreen}\Delta} \Vdash {\color{mygreen}K}$}.
Every node in {\small$\mnode_i$} that is connected to an observable channel is relevant, \ie nodes {\small$\m{d}$} and {\small$\m{e}$} in {\small$\mnode$} and nodes {\small$\m{d}$} and {\small$\m{e}$} in {\small$\mnode'$}.
Climbing up the ancestor line starting at those nodes, we include an ancestor node if its offering channel has maximal secrecy {\small$\sqsubseteq \xi$}.
Moreover, we include the closest ancestor node whose offering channel has maximal secrecy {\small$\not \sqsubseteq \xi$} and whose running secrecy is {\small$\sqsubseteq \xi$}.
We refer to the maximal ancestor node in this set as the \emph{pivot} node.
{\small$\mnode$} has the pivot nodes {\small$\m{a}$} and {\small$\m{c}$} and {\small$\mnode'$} has the pivot node {\small$\m{a}$}.
We then expand the set of relevant nodes with all of the pivot node's direct children that have an offering channel with maximal secrecy {\small$\sqsubseteq \xi$}.
In addition, we include the subtrees rooted at those children.
For pivot node {\small$\m{a}$} and {\small$\m{c}$} in {\small$\mnode$} these are the nodes rooted at {\small$\m{b}$}, {\small$\m{e}$}, and {\small$\m{g}$}.
For pivot node {\small$\m{a}$} in {\small$\mnode'$} these are the nodes rooted at {\small$\m{b}$}.

As illustrated by \Cref{fig:relevant-nodes}, the set of relevant nodes can become smaller when the program transitions and the running secrecy of a pivot node contained in the set increases as a result.
The transition in \Cref{fig:relevant-nodes} increases the running secrecy of pivot node {\small$\m{c}$}, which hence no longer is a pivot  node in {\small$\mnode'$}.
This transition can come about when {\small$\m{c}$} receives a message either from {\small$\m{a}$} or {\small$\m{f}$}.

The description above is phrased for a synchronous semantics for simplicity.
Our technical development accounts for an asynchronous semantics.
The notion of a relevant node is vital for stating noninterference in a progress-sensitive way.
In the proof of the fundamental theorem, we strengthen the inductive hypothesis to demand that the relevant nodes in {\small$\mnode_1$} and {\small$\mnode_2$} must execute exactly the same code, meaning that for every relevant {\small$\mathbf{proc}(x[d], P_1@d_1)$} or {\small$\mathbf{msg}(P_1)$} in {\small$\mnode_1$} and its counterpart {\small$\mathbf{proc}(x[d], P_2@d_1)$} or {\small$\mathbf{msg}(P_2)$} in {\small$\mnode_2$} it must hold that {\small$P_1 = P_2$}.
With this assumption we can easily discharge the proof obligation that {\small$\mnode_2$} produces the same set of outgoing channels as {\small$\mnode_1$} and thus the same observable outcome.
\section{Adequacy and equivalence}\label{sec:5.metatheory}

This section establishes the main properties of our logical relation.\footnote{Standard safety properties can be found in the appendix (Thms.~C.3, C.4).}
We first show that it is an equivalence relation, necessitating proofs of reflexivity, symmetry, and transitivity.
Reflexivity, in particular, coincides with the noninterference property of $\lang$.
Then we establish adequacy, asserting that two logically equivalent programs are bisimilar when closed with the same context.



\subsection{Logical equivalence}
We define the logical equivalence, {\small$\equiv^{\Psi_0}_{\xi}$}, of two open configurations as being related by the term interpretation for every possible pair of substitutions and any number of observations, {\small$m$}. 

\begin{definition}\label{def:noninterference}
The relation {\small\[(\Delta_1 \Vdash \mc{D}_1:: x_\alpha {:}A_1[{c_1}]) \equiv^{\Psi_0}_{\xi} (\Delta_2 \Vdash \mc{D}_2:: y_\beta {:}A_2[{c_2}])\]}is defined as 
{\small $\Psi_0; \Delta_1 \Vdash \mc{D}_1:: x_\alpha {:}A_1[{c_1}]$ and $\Psi_0; \Delta_2 \Vdash \mc{D}_2:: y_\beta {:}A_2[{c_2}]$} and
{\small $\Delta_1 {\Downarrow} \xi= \Delta_2 {\Downarrow} \xi=\Delta$ and  $x_\alpha {:}A_1[{c_1}]{\Downarrow} \xi= y_\beta {:}A_2[{c_2}] {\Downarrow} \xi= K$} and
          for all {\small$\mc{C}_1, \mc{C}_2, \mc{F}_1, \mc{F}_2,$}
          with {\small$\ctype{\Psi_0}{\cdot}{\mc{C}_i}{\Delta_i}$}  and {\small$\ctype{\Psi_0}{x_\alpha {:}A_1[{c_1}]}{\mc{F}_1}{\cdot}$} and 
            {\small$\ctype{\Psi_0}{y_\beta {:}A_2[{c_2}]}{\mc{F}_2}{\cdot}$}, and  for all {\small$m$}  we have{\small
\[\begin{array}{l}
(\mc{C}_1\mc{D}_1\mc{F}_1, \mc{C}_2\mc{D}_2\mc{F}_2) \in \mc{E}_{\Psi_0}^\xi\llbracket \Delta \Vdash K \rrbracket^{m},\,
\m{and} \\[1pt]
 (\mc{C}_2\mc{D}_2\mc{F}_2, \mc{C}_1\mc{D}_1\mc{F}_1) \in \mc{E}_{\Psi_0}^\xi\llbracket \Delta \Vdash K \rrbracket^{m}.
\end{array}
\]}
\end{definition}
Symmetry is immediate from the above definition. We focus on the proofs of reflexivity and transitivity and our techniques to overcome the challenges imposed by non-termination and concurrency.

\subsubsection{Reflexivity}


Reflexivity is at the heart of noninterference and relies on the notion of relevant nodes introduced in \Cref{sec:4.logical-relation:nontermination}.
A crucial lemma for establishing noninterference is that two configurations with equal relevant nodes can maintain this equality while stepping.
If one configuration takes a step, the other one can catch up and sustain the equality of their relevant nodes.
\begin{lemma}\label{lem:indinvariant1}
Consider {\small$(\lr{\mc{C}_1}{\mc{D}_1}{\mc{F}_1}; \lr{\mc{C}_2}{\mc{D}_2}{\mc{F}_2}) \in \m{Tree}_{\Psi_0}(\Delta \Vdash K)$}. 
If {\small$\lr{\mc{C}_1} {\mc{D}_1}{ \mc{F}_1} \mapsto_{\Delta \Vdash K}\lr{\mc{C}_1} {\mc{D}'_1}{ \mc{F}_1}$} 
such that 
{\small$\cproj{\mc{D}_1}{\xi}=\cproj{\mc{D}_2}{\xi}$}, then for some {\small$\lr{\mc{C}'_2}{\mc{D}'_2} {\mc{F}'_2}$} 
, we have {\small$\lr{\mc{C}_2} {\mc{D}_2}{ \mc{F}_2} \mapsto^{0,1}_{\Delta \Vdash K}\lr{\mc{C}_2} {\mc{D}'_2}{ \mc{F}_2}$} such  that 
{\small$\cproj{\mc{D}'_1}{\xi}=\cproj{\mc{D}'_2}{\xi}$}.
\end{lemma}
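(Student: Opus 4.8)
The plan is to prove the statement by a case analysis on the single rewriting rule of \Cref{fig:dynamics} that fires in $\lr{\mc{C}_1}{\mc{D}_1}{\mc{F}_1} \mapsto_{\Delta \Vdash K} \lr{\mc{C}_1}{\mc{D}'_1}{\mc{F}_1}$, together with the position of the acting node inside $\mc{D}_1$. For each firing rule I first locate the process or message node $N_1$ of $\mc{D}_1$ that is rewritten and classify it using the tree invariant of \Cref{sec:2.background.treeinv}: its running secrecy $\runsec{@d_1}$ and the maximal secrecy $\maxsec{[c]}$ of the channel on which it acts determine whether $N_1$ is a relevant node, i.e.\ whether it belongs to $\cproj{\mc{D}_1}{\xi}$. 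Since the two runs agree on their observable projection, $\cproj{\mc{D}_1}{\xi}=\cproj{\mc{D}_2}{\xi}$, every relevant node of $\mc{D}_1$ has a syntactically identical counterpart in $\mc{D}_2$; this is the single fact that lets $\mc{D}_2$ mimic $\mc{D}_1$, and well-typedness of the stepped configurations (membership in $\m{Tree}_{\Psi_0}(\Delta \Vdash K)$) is inherited from type preservation.

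If $N_1$ is not relevant, then $N_1 \notin \cproj{\mc{D}_1}{\xi}$ and I claim the step leaves the observable projection untouched, $\cproj{\mc{D}'_1}{\xi}=\cproj{\mc{D}_1}{\xi}$. The reason is the type system's guarantee that information cannot flow from high to low: an action of $N_1$ is confined to channels whose maximal secrecy is $\not\sqsubseteq \xi$ (or is taken by a process whose running secrecy is already $\not\sqsubseteq \xi$), so it can neither create, destroy, nor alter any node that influences transmission across the observable interface $\Delta \Vdash K$. I then let $\mc{D}_2$ idle (the $0$-step case) and conclude $\cproj{\mc{D}'_1}{\xi}=\cproj{\mc{D}_1}{\xi}=\cproj{\mc{D}_2}{\xi}$.

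If $N_1$ is relevant, I use projection equality to pick its identical twin $N_2$ in $\mc{D}_2$ and fire the same rule on $N_2$ (the $1$-step case). For the sending rules ($1_{\m{snd}},\oplus_{\m{snd}},\&_{\m{snd}},\otimes_{\m{snd}},\multimap_{\m{snd}}$) and for $\msc{Spawn}$ and $\msc{Fwd}$ the rewrite is unconditionally enabled, so $N_2$ can take the matching step, and since the emitted message (or spawned subtree) is generated by identical code it carries the same label and channel data, keeping the two projections equal. For the receiving rules the step consumes a message $M_1$ sitting at $N_1$'s port; here I argue that $M_1$ is itself a relevant node --- it is precisely the node that drives $N_1$'s transition and hence influences the observable outcome --- so by projection equality the identical message $M_2$ is present in $\mc{D}_2$ and enables the matching receive. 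I then re-evaluate the definition of relevant nodes on $\mc{D}'_1$ and $\mc{D}'_2$ to check that their projections again coincide.

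The main obstacle is the boundary case in which the relevant set \emph{shrinks}: a pivot node receives a message that raises its running secrecy above $\xi$, so that the pivot and the observable subtrees hanging off it leave the projection (as for node $\m{c}$ and its child $\m{g}$ in \Cref{fig:relevant-nodes}). The delicate point is to show $\mc{D}_2$ reproduces this shrinkage in at most one step, which forces the triggering message to be available in $\mc{D}_2$ as well. This is exactly why such a message must be counted as relevant even though it travels on a high channel: being relevant, it lies in $\cproj{\mc{D}_1}{\xi}=\cproj{\mc{D}_2}{\xi}$ and is therefore present in both runs, so $\mc{D}_2$'s pivot can perform the corresponding receive and drop the same subtrees. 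The second source of bookkeeping difficulty is the asynchronous dynamics, where each send is split into spawning a $\mb{msg}$ process and a later receive; I handle this by treating the $\mb{msg}$ processes as first-class (relevant or not) nodes exactly as above, so that a single asynchronous action of $\mc{D}_1$ is always matched by at most one action of $\mc{D}_2$, justifying the $\mapsto^{0,1}_{\Delta \Vdash K}$ in the conclusion.
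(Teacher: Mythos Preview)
Your overall structure---case analysis on the rewriting rule, splitting on whether the acting node lies in the observable projection, and matching with $0$ or $1$ steps accordingly---is the natural approach and is essentially what the paper's appendix proof does. You also correctly isolate the delicate boundary case (a pivot receiving along a high channel and thereby dropping out of the projection together with its low-secrecy subtree).

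There is, however, a genuine gap in how you resolve that boundary case. You argue that the high message $M$ consumed by the pivot ``must be counted as relevant,'' so that $M\in\cproj{\mc{D}_1}{\xi}=\cproj{\mc{D}_2}{\xi}$ guarantees its presence in $\mc{D}_2$. But this assertion is in tension with your own treatment of non-relevant steps. The node that \emph{produces} $M$---the pivot's parent (for a negative connective on the pivot's offering channel) or a high child (for a positive connective)---is, under the informal description of relevant nodes you rely on, itself \emph{not} relevant: the parent lies above the pivot where the upward traversal stops, and high children of a pivot are explicitly excluded. So if $M$ is declared relevant, the send step creating $M$ is a step by a non-relevant node that \emph{enlarges} $\cproj{\mc{D}_1}{\xi}$, contradicting your claim that non-relevant steps leave the projection unchanged and that $\mc{D}_2$ may idle. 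Conversely, if $M$ is \emph{not} relevant, then $\mc{D}_2$ need not contain any message on that high channel (its non-relevant sender is unconstrained), so $\mc{D}_2$'s identical pivot cannot fire the matching receive, and the $0$-step alternative also fails because $\cproj{\mc{D}_2}{\xi}$ still contains the pivot and its low subtree. Your argument does not escape this dichotomy.

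The resolution lives in the precise asynchronous definition of $\cproj{-}{\xi}$ (Def.~C.6), which the paper explicitly says differs from the informal synchronous sketch you are working from. The technical definition arranges the relevance of a pivot (and of the message nodes immediately adjacent to it on high channels) so that the creation and consumption of such a message are either both witnessed by identical relevant nodes in both runs, or both invisible to the projection; the informal description in \Cref{sec:4.logical-relation:nontermination} is not fine enough to see this. Your proof sketch should acknowledge that this case cannot be closed without appealing to that definition rather than asserting what it ``must'' say.
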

\begin{proof}
For the complete proof see Lem.~C.17 in the appendix.
\end{proof}
Using \Cref{lem:indinvariant1}, we can establish our fundamental theorem, requiring two open programs to be equal up to their relevant nodes.
\begin{theorem}[Fundamental theorem]\label{thm:ft}
For all security levels $\xi$, and configurations ${\small\Psi_0; \Delta_1 \Vdash \mathcal{D}_1:: u_\alpha {:}T_1[c_1]}$ and ${\small\Psi_0; \Delta_2 \Vdash \mathcal{D}_2:: v_\beta {:}T_2[c_2]}$  with $\cproj{\mathcal{D}_1}{\xi} \meq \cproj{\mathcal{D}_2}{\xi}$, $\Delta_1 {\Downarrow} \xi = \Delta_2 {\Downarrow} \xi$, and $u_\alpha {:}T_1[c_1]{\Downarrow} \xi = v_\beta {:}T_2[c_2] {\Downarrow} \xi$  we have {\small$(\Delta_1 \Vdash \mathcal{D}_1:: u_\alpha {:}T_1[c_1])  \equiv^{\Psi_0}_{\xi} (\Delta_2 \Vdash \mathcal{D}_2:: v_\beta {:}T_2[c_2]).$}
\end{theorem}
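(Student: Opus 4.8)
The plan is to read \Cref{thm:ft} as the reflexivity half of logical equivalence and to discharge it directly against \Cref{def:noninterference}. Fix the observer level $\xi$ and arbitrary closing configurations $\mc{C}_1,\mc{C}_2,\mc{F}_1,\mc{F}_2$ of the prescribed types; it then suffices to prove, for every observation index $m$, both $(\mc{C}_1\mc{D}_1\mc{F}_1, \mc{C}_2\mc{D}_2\mc{F}_2) \in \mc{E}^\xi_{\Psi_0}\llbracket \Delta \Vdash K\rrbracket^{m}$ and the symmetric membership. Because $\mnode_1$ and $\mnode_2$ enter symmetrically, I would establish one direction and obtain the other by swapping the two runs. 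The induction is on $m$, which is legitimate since the relation is well founded in the lexicographic order on the observation count together with the marker $\mc{V} < \mc{E}$. For the base case $m=0$, the term interpretation in \Cref{fig:rel_term} only asks for $\m{Tree}_{\Psi_0}(\Delta \Vdash K)$, i.e., well-typedness with matching well-typed substitutions, which is exactly the typing hypothesis.

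For the step to $m+1$ I would carry, as the real induction hypothesis, the strengthened relevant-node equality anticipated in \Cref{sec:4.logical-relation:nontermination}: the hypothesis $\cproj{\mc{D}_1}{\xi} \meq \cproj{\mc{D}_2}{\xi}$ is read as demanding that every relevant node of $\mnode_1$ and its counterpart in $\mnode_2$ run syntactically identical code. Unfolding $\mc{E}^\xi_{\Psi_0}\llbracket \Delta\Vdash K\rrbracket^{m+1}$ produces two obligations. First, given any finite internal run of $\mnode_1$ that makes outgoing channels $\Upsilon_1$ ready, I must exhibit a run of $\mnode_2$ that makes the same $\Upsilon_1$ ready; I provide this witness by iterating the catch-up \Cref{lem:indinvariant1}, matching each observable step of $\mnode_1$ by at most one step of $\mnode_2$ and leaving steps internal to high-secrecy (non-relevant) code unmatched. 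This is exactly where progress sensitivity is discharged: since the nodes that determine $\Upsilon_1$ are relevant and hence run identical code, $\mnode_2$ can be scheduled to ready the same messages, and neither run can be observably productive on $\Upsilon_1$ while the other stays silent.

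Second, for each focus channel bearing a ready message in both runs I must land in the value interpretation $\mc{V}^\xi_{\Psi_0}\llbracket \Delta\Vdash K\rrbracket^{m+1}$, which I do by case analysis on the connective of the channel's type, using \Cref{fig:rel_value_right} for right communications and \Cref{fig:rel_value_left} for left ones. For the ground types $1$, $\oplus$, and $\&$, relatedness is identity of messages, immediate from the emitting relevant nodes running the same code; for the higher-order types $\otimes$ and $\multimap$, the transported subtrees $\mtree_1$, $\mtree_2$ must be related by the term interpretation at the strictly smaller index $m$, which I obtain by applying the induction hypothesis to those subtrees. After each ready message is pushed across the interface and $\Delta\Vdash K$ is updated, the value interpretation again reduces to membership in $\mc{E}$ at index $m$, closed by the induction hypothesis once I check that consuming one message preserves the relevant-node equality $\meq$ on the successor configurations.

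The main obstacle is precisely this preservation of the relevant-node invariant through internal stepping and message consumption under concurrency and non-termination. As \Cref{fig:relevant-nodes} illustrates, the set of relevant nodes is not stable: a pivot node whose running secrecy rises above $\xi$ leaves the set, and subtrees rooted at newly non-observable channels are internalized into $\mnode$. I must show that this evolution proceeds in lockstep (up to $\meq$) in both runs and that a high-to-low dependence can never reintroduce a discrepancy into the observable outcome. Establishing this, leaning on the tree invariant and on the typing guarantee that low-secrecy code cannot sequentially depend on high-secrecy code, is the delicate heart of the argument, and is the reason \Cref{lem:indinvariant1} is formulated in terms of the observable projection $\cproj{\cdot}{\xi}$.
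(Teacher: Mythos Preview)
Your proposal is correct and follows essentially the same approach the paper sketches: carry the strengthened relevant-node equality $\cproj{\mc{D}_1}{\xi} \meq \cproj{\mc{D}_2}{\xi}$ as the real inductive invariant, induct on the observation index $m$, iterate \Cref{lem:indinvariant1} to produce the witness $\mc{D}'_2$ matching the outgoing set $\Upsilon_1$, and then discharge the value interpretation by case analysis on the focus channel's connective. One small point to tighten is the quantifier order: because the value clauses for $\otimes$ and $\multimap$ (and the post-message recursive calls generally) invoke $\mc{E}$ at index $m$ on subtrees paired with \emph{different} closing contexts, the statement you induct on must already quantify universally over both the program pair (subject to $\cproj{\cdot}{\xi}$-equality) and all closing substitutions, rather than fixing $\mc{C}_i,\mc{F}_i$ before the induction---your later uses of the IH already implicitly assume this stronger form.
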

\begin{proof}
For the complete proof see Thm.~C.19 in the appendix.
\end{proof}

The reflexivity of the relation $\equiv^{\Psi_0}_{\xi}$, otherwise known as noninterference, is a straightforward corollary of the fundamental theorem. 
\begin{theorem}[Noninterference]
For all security levels $\xi$ and configurations $\Psi_0; \Delta \Vdash \mathcal{D}:: x_\alpha {:}T[c]$, we have 
{\small
\[(\Delta \Vdash \mathcal{D}:: x_\alpha {:}T[c])  \equiv^{\Psi_0}_{\xi} (\Delta \Vdash \mathcal{D}:: x_\alpha {:}T[c]).\]
}
\end{theorem}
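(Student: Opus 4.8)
The plan is to derive the Noninterference theorem directly as an instance of the Fundamental theorem (\Cref{thm:ft}), exploiting the fact that reflexivity is just the diagonal case where both program runs are the very same configuration. I would start from an arbitrary security level $\xi$ and an arbitrary well-typed configuration $\Psi_0; \Delta \Vdash \mathcal{D}:: x_\alpha {:}T[c]$, and instantiate \Cref{thm:ft} by setting $\mathcal{D}_1 \meq \mathcal{D}_2 \meq \mathcal{D}$, $\Delta_1 \meq \Delta_2 \meq \Delta$, and $u_\alpha{:}T_1[c_1] \meq v_\beta{:}T_2[c_2] \meq x_\alpha{:}T[c]$. The goal then reduces to checking that the three hypotheses of the Fundamental theorem are satisfied in this diagonal instance.

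The key steps are to verify each hypothesis. First, the typing premises $\Psi_0; \Delta_1 \Vdash \mathcal{D}_1 :: u_\alpha{:}T_1[c_1]$ and $\Psi_0; \Delta_2 \Vdash \mathcal{D}_2 :: v_\beta{:}T_2[c_2]$ hold immediately, since both collapse to the assumed well-typedness of $\mathcal{D}$. Second, the projection conditions $\Delta_1 {\Downarrow} \xi = \Delta_2 {\Downarrow} \xi$ and $u_\alpha{:}T_1[c_1]{\Downarrow}\xi = v_\beta{:}T_2[c_2]{\Downarrow}\xi$ are trivially satisfied because projecting the same context and the same offering channel through the same filter $\Downarrow \xi$ yields identical results. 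Third, the relevant-node equality $\cproj{\mathcal{D}_1}{\xi} \meq \cproj{\mathcal{D}_2}{\xi}$ holds since $\cproj{\mathcal{D}}{\xi} = \cproj{\mathcal{D}}{\xi}$ by reflexivity of syntactic equality on the same configuration. Having discharged all three hypotheses, \Cref{thm:ft} yields $(\Delta \Vdash \mathcal{D}:: x_\alpha{:}T[c]) \equiv^{\Psi_0}_{\xi} (\Delta \Vdash \mathcal{D}:: x_\alpha{:}T[c])$, which is exactly the claim.

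I anticipate that there will be essentially no real obstacle here: the entire difficulty of noninterference has already been front-loaded into the Fundamental theorem, whose proof rests on \Cref{lem:indinvariant1} and the careful machinery of relevant nodes to handle nondeterminism and non-termination in a progress-sensitive way. The one place that warrants a moment of care is confirming that the relevant-node equality hypothesis is genuinely reflexive rather than requiring some nontrivial alignment: because the two runs are literally the same configuration $\mathcal{D}$, every relevant $\mathbf{proc}(x[d], P@d_1)$ or $\mathbf{msg}(P)$ in the first run is matched by the identical node in the second, so the strengthened condition $P_1 = P_2$ on relevant nodes (as discussed in \Cref{sec:4.logical-relation:nontermination}) holds vacuously. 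Thus the theorem follows as an immediate corollary, with the substantive work residing entirely in \Cref{thm:ft}.
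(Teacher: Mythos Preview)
Your proposal is correct and matches the paper's own treatment: the paper explicitly states that reflexivity of $\equiv^{\Psi_0}_{\xi}$, i.e., noninterference, is a straightforward corollary of the Fundamental theorem, and does not supply any further proof. Your verification of the three hypotheses in the diagonal instance is exactly the (trivial) content of that corollary.
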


\subsubsection{Transitivity}\label{subsec:transitivity}

Proofs of transitivity can be notoriously challenging.
We focus here on the key ideas underlying our proof and explain the important role the choice of our step index assumed.

\begin{lemma}[Transitivity]
For all security levels {\small$\xi$} and configurations {\small$\Psi_0; \Delta_1 \Vdash \mathcal{D}_1:: x_\alpha {:}T_1[c_1]$}, and {\small$\Psi_0; \Delta_2 \Vdash \mathcal{D}_2:: y_\beta {:}T_2[c_2]$}, and {\small$\Psi_0; \Delta_3 \Vdash \mathcal{D}_3:: z_\eta {:}T_3[c_3]$}
{\small\[\begin{array}{ll}
\m{if}\quad\dagger_1\,(\Delta_1 \Vdash \mathcal{D}_1:: x_\alpha {:}T_1[c_1])  \equiv^{\Psi_0}_{\xi} (\Delta_2 \Vdash \mathcal{D}_2:: y_\beta {:}T_2[c_2])\;\m{and}\\
\;\quad\;\,\,\dagger_2\,\,(\Delta_2 \Vdash \mathcal{D}_2:: y_\beta {:}T_2[c_2])  \equiv^{\Psi_0}_{\xi} (\Delta_3 \Vdash \mathcal{D}_3:: z_\eta {:}T_3[c_3]),\\
\m{then}\star\,(\Delta_1 \Vdash \mathcal{D}_1:: x_\alpha {:}T_1[c_1]) \equiv^{\Psi_0}_{\xi} (\Delta_3 \Vdash \mathcal{D}_3:: z_\eta {:}T_3[c_3]).
\end{array}\]}
\end{lemma}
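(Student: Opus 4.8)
The plan is to prove transitivity by \emph{induction on the observation index} $m$, exploiting exactly the feature the paper advertises: since the index counts messages crossing the interface and every such message decrements it, the continuations and transported subtrees that arise inside a value interpretation are always compared at the strictly smaller index $m-1$, where the induction hypothesis applies. By the symmetry noted right after Definition~\ref{def:noninterference} it suffices to establish the left-to-right direction, i.e.\ that $(\mc{C}_1\mc{D}_1\mc{F}_1,\mc{C}_3\mc{D}_3\mc{F}_3)\in\mc{E}^\xi_{\Psi_0}\llbracket\Delta\Vdash K\rrbracket^m$ for all closing $\mc{C}_1,\mc{F}_1,\mc{C}_3,\mc{F}_3$ and all $m$; the other direction follows from the reverse halves of $\dagger_1$ and $\dagger_2$. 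Note first that all three programs share the \emph{same} observable interface $\Delta\Vdash K$, since $\dagger_1$ and $\dagger_2$ force $\Delta_1{\Downarrow}\xi=\Delta_2{\Downarrow}\xi=\Delta_3{\Downarrow}\xi$ and likewise for $K$, so the projection built into $\mc{E}$ lands each pair in the same relation. The base case $m=0$ is just the $\m{Tree}$ obligation and is immediate.

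The middle program $\mc{D}_2$ serves as a \emph{mediator}. The key move is that $\dagger_1$ and $\dagger_2$ are quantified over \emph{all} closing substitutions, so at each observation I may instantiate them with a \emph{freshly chosen} closing $\mc{C}_2,\mc{F}_2$ of $\mc{D}_2$ tailored to that observation (such closings exist, and for a given observable channel of known type and a given message we can construct a feeder/consumer emitting it). For the inductive step at index $m\ge 1$, I unfold $\mc{E}^{m}$ for the pair $(\mc{D}_1,\mc{D}_3)$: given a program reduction of $\mc{D}_1$ producing outgoing set $\Upsilon_1$, I apply $\dagger_1$ to obtain a reduction of $\mc{D}_2$ producing $\Upsilon_1$, and feed that very reduction into $\dagger_2$ to obtain a reduction of $\mc{D}_3$ producing $\Upsilon_1$, which supplies the required witness $\mc{D}'_3$. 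This production phase only steps the program with substitutions held fixed, so the particular choice of $\mc{C}_2,\mc{F}_2$ is immaterial here.

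For the value-interpretation obligations I treat each focus channel $x$ separately. For outgoing $x\in\Upsilon_1$ I run the mediator once, fix a common middle continuation $\mc{B}''_2$, and read off from $\dagger_1$ and $\dagger_2$ (both at focus $x$) that the message emitted on $x$ by $\mc{D}'_1$ equals that of $\mc{B}''_2$, which equals that of $\mc{D}'_3$ (for ground $1,\oplus,\&$ this is literal message equality, which is transitive), while the continuations satisfy $(\mc{B}''_1,\mc{B}''_2)\in\mc{E}^{m-1}$ and $(\mc{B}''_2,\mc{B}''_3)\in\mc{E}^{m-1}$; the induction hypothesis at index $m-1$ then yields $(\mc{B}''_1,\mc{B}''_3)\in\mc{E}^{m-1}$. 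The higher-order cases $\otimes,\multimap$ are identical except that the transported subtrees are themselves related by $\mc{E}^{m-1}$ and composed by the same hypothesis. Incoming channels $x\in\Theta_1\cap\Theta_3$ are dual: the incoming clauses of the value interpretation are vacuous unless the two environments emit the \emph{same} message on $x$, so I may assume they do and then choose the mediator's environment to emit that very message on $x$, making both $\dagger_1$ and $\dagger_2$ fire at focus $x$ and again chaining the continuations through $\mc{E}^{m-1}$. Since $\mc{V}$ recurses only into $\mc{E}$ at a strictly smaller index while $\mc{E}$ recurses into $\mc{V}$ at the same index with $\mc{V}<\mc{E}$, the whole argument is carried by a single lexicographic induction on $(m,\mc{V}<\mc{E})$.

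I expect the main obstacle to be \emph{scheduling the mediator}: I must exhibit a single coherent run of $\mc{D}_2$ with adaptively chosen closing substitutions that simultaneously reproduces $\mc{D}_1$'s outgoing observation set \emph{and} emits or accepts exactly the matched messages on the focus channels, so that one middle continuation $\mc{B}''_2$ lines up as the right component of $\dagger_1$ and the left component of $\dagger_2$ at the same time. This is precisely where the nondeterministic semantics and the existence and constructibility of suitable mediating substitutions must be leveraged, analogously to the role \Cref{lem:indinvariant1} plays in reflexivity; the counting-by-observations index is what keeps every mediation step well-founded and makes the composition of continuations at index $m-1$ go through.
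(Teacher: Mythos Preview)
Your proposal contains a genuine gap in how you close the mediator $\mc{D}_2$. You write that ``at each observation I may instantiate [$\dagger_1,\dagger_2$] with a freshly chosen closing $\mc{C}_2,\mc{F}_2$ of $\mc{D}_2$ tailored to that observation,'' but $\dagger_1$ and $\dagger_2$ quantify over closings of the \emph{original} programs $\mc{D}_1,\mc{D}_2,\mc{D}_3$, not of their reducts. Once you pick some $\mc{C}_2,\mc{F}_2$ and take a step, the mediator has become a specific closed configuration; you cannot re-open it and re-close differently at the next observation. This bites precisely in the incoming case: for $x\in\Theta_1\cap\Theta_3$ the value clause is conditional on both environments emitting the \emph{same} message $q$ on $x$, and $q$ is determined by the universally quantified $\mc{C}_1',\mc{F}_1',\mc{C}_3',\mc{F}_3'$. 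To chain through $\mc{D}_2$ you need its environment to also emit $q$ on $x$, but that environment is now a fixed reduct of whatever $\mc{C}_2,\mc{F}_2$ you committed to at the outset, and there is no reason it can produce an arbitrary $q$. Your ``construct a feeder emitting it'' idea does not help, because the feeder must also behave correctly at every subsequent observation, for every possible future message, which you cannot anticipate.

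The paper avoids this with the Ahmed-style instantiation: fix $\mc{C}_2=\mc{C}_3$ and $\mc{F}_2=\mc{F}_3$ once and for all, and invoke \emph{reflexivity} (the fundamental theorem) on $\mc{C}_3$ and $\mc{F}_3$ to set up a rely--guarantee argument showing that $\lre{\mc{C}_3}{\mc{D}_2}{\mc{F}_3}$ and $\lre{\mc{C}_3}{\mc{D}_3}{\mc{F}_3}$ exchange identical messages across the interface; with identical closings the incoming case becomes automatic. The higher-order connectives then expose a second gap in your plan: after one observation you only know the transported subtrees are related at index $m-1$, yet the compositionality lemma (\Cref{lem:compositionality}) needed to absorb those subtrees into the growing substitutions requires relatedness for \emph{all} $m$. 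The paper recovers this via \Cref{lem:moveq}, which pushes the outer $\forall m$ from the term interpretation into each value clause; that lemma works because the observation-count index makes the existential witness $\mc{D}_2'$ independent of $m$, which is the real payoff of the indexing choice. An induction whose hypothesis is merely ``transitivity of $\mc{E}^{m-1}$'' does not supply this, so your chaining of continuations through the mediator at $m-1$ does not suffice for $\otimes$ and $\multimap$.
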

\begin{proof}
See Lem.~C.22 in the appendix for the complete proof.
\end{proof}
 To explain the key ideas of our proof more smoothly, we consider a trivial lattice with only one element. A trivial lattice in the transitivity lemma implies $\Delta_i=\Delta$ and $x_\alpha {:}T_1[c_1]=y_\beta {:}T_2[c_2]=z_\eta {:}T_3[c_3]$.
 
 Our first technique in the proof of transitivity is similar to~\citet{AhmedESOP2006}, i.e., to provide the same substitutions for the second and third programs and use reflexivity. Consider an arbitrary pair of substitutions for $\mc{D}_1$ and $\mc{D}_3$ in $\star$:  $\mc{C}_1, \mc{F}_1$ for $\mc{D}_1$ and $\mc{C}_3,\mc{F}_3$ for $\mc{D}_3$. We instantiate the substitutions in $\dagger_1$ with $\mc{C}_1, \mc{F}_1$ for $\mc{D}_1$ and $\mc{C}_3,\mc{F}_3$ for $\mc{D}_2$, and the substitutions in $\dagger_2$ with $\mc{C}_3, \mc{F}_3$ for $\mc{D}_2$ and $\mc{C}_3,\mc{F}_3$ for $\mc{D}_3$. Reflexivity ensures $\circ_1\,\mc{C}_3\equiv^{\Psi_0}_{\xi}\mc{C}_3$ and $\circ_2\,\mc{F}_3\equiv^{\Psi_0}_{\xi}\mc{F}_3$.\footnote{$\mc{C}_3\equiv^{\Psi_0}_{\xi}\mc{C}_3$ stands for reflexivity of each tree in $\mc{C}_3$.} In particular, when we put together the substitutions $\mc{C}_3,\mc{F}_3$ with programs $\mc{D}_2$ and $\mc{D}_3$, they send the same messages to $\mc{D}_2$ and $\mc{D}_3$ as long as $\mc{D}_2$ and $\mc{D}_3$ respond with the same messages. Therefore, we are in a rely-guarantee situation; for example, when $T_3$ is a negative protocol, $\mc{D}_2$ and $\mc{D}_3$ rely on $\mc{F}_3$ to send the same messages, and $\mc{F}_3$ guarantees equal messages by $\circ_2$, and when the protocol $T_3$ is positive, $\mc{F}_3$ relies on $\mc{D}_2$ and $\mc{D}_3$ to send the same messages, and $\mc{D}_2$ and $\mc{D}_3$ guarantee equal messages by $\dagger_2$. The rely-guarantee situation would imply that all messages sent or received along $\Delta \Vdash z_\eta {:}T_3[c_3]$ are identical in $\lre{\mc{C}_3}{\mc{D}_2}{\mc{F}_3}$ and $\lre{\mc{C}_3}{\mc{D}_3}{\mc{F}_3}$. The proof of transitivity follows from this observation and $\dagger_1$.

However, this technique on its own is insufficient in our effectful language with higher-order channels. In our setting, substitutions are dynamic configurations of processes and can grow by receiving a tree from the program along a higher-order channel. For example, in the proof of transitivity sketched above, consider a case where $\mc{D}_2$ and $\mc{D}_3$ are connected to $\mc{F}_3$ along $z_\eta$ with protocol $A \otimes B$, and  $\mc{D}_2$ sends a tree $\Psi_0;\cdot \Vdash  \mc{T}_2::x_\delta{:}A$ to $\mc{F}_3$ along $z_\eta$ and  $\mc{D}_3$ sends a tree $\Psi_0;\cdot \Vdash  \mc{T}_3::x_\delta{:}A$ to $\mc{F}_3$ along $z_\eta$. At this point, the substitutions evolve to $\mc{T}_3\mb{msg}(\mb{send}x_\delta\,z_\eta;\_)\mc{F}_3$  and  $\mc{T}_2\mb{msg}(\mb{send}x_\delta\,z_\eta;\_)\mc{F}_3$, respectively, and cease to be identical (by the value interpretation for $\otimes$ in \Cref{fig:rel_value_right}). To rescue our proof of transitivity, we use the fact that $\mc{T}_3$ and $\mc{T}_2$, although not necessarily identical, are related as assured by the value interpretation of $\otimes$ in \Cref{fig:rel_value_right}. \Cref{lem:compositionality} shows that the compositions of two pairs of related trees are related.

\begin{lemma}[Compositionality]\label{lem:compositionality}
For {\small $\Psi_0;\Delta , u_\alpha^c{:}T \Vdash \mc{D}_i:: K$} and {\small $\Psi_0;\Delta'  \Vdash \mc{T}_i::  u_\alpha^c{:}T$} and {\small$\cdot \Vdash \mc{C}_i :: \Delta$ and $\Psi_0;\cdot \Vdash \mc{C}'_i :: \Delta'$} and {\small $\Psi_0;K \Vdash \mc{F}_i:: \cdot$}, if {\small $\forall m.\, (\lre{\mc{C}_1\mc{C}'_1\mc{T}_1}{\mc{D}_1}{\mc{F}_1};\lre{\mc{C}_2\mc{C}'_2\mc{T}_2}{\mc{D}_2}{\mc{F}_2}) \in \mc{E}^\xi_{\Psi_0}\llbracket \Delta, u_\alpha^c{:}T \Vdash K \rrbracket^m$} and {\small $\forall m.\, (\lre{\mc{C}'_1}{\mc{T}_1}{\mc{C}_1\mc{D}_1\mc{F}_1};\lre{\mc{C}'_2}{\mc{T}_2}{\mc{C}_2\mc{D}_2\mc{F}_2}) \in \mc{E}^\xi_{\Psi_0}\llbracket \Delta'\Vdash  u_\alpha^c{:}T \rrbracket^m$} then {\small$\forall m.\, (\lre{\mc{C}_1\mc{C}'_1}{\mc{T}_1 \mc{D}_1}{\mc{F}_1};\lre{\mc{C}_2\mc{C}'_2}{\mc{T}_2\mc{D}_2}{\mc{F}_2}) \in \mc{E}^\xi_{\Psi_0}\llbracket \Delta', \Delta\Vdash K \rrbracket^m$}.
\end{lemma}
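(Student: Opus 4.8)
The plan is to prove the statement by induction on the step index $m$, following the structure of the term interpretation in \Cref{fig:rel_term}. The base case $m=0$ requires only that $(\lr{\mc{C}_1\mc{C}'_1}{\mc{T}_1\mc{D}_1}{\mc{F}_1}; \lr{\mc{C}_2\mc{C}'_2}{\mc{T}_2\mc{D}_2}{\mc{F}_2}) \in \m{Tree}_{\Psi_0}(\Delta', \Delta \Vdash K)$, which follows by composing the typing derivations of $\mc{T}_i$ and $\mc{D}_i$ along $u_\alpha^c$: the cut between them internalizes $u_\alpha$ and exposes $\Delta'$. For the inductive step I unfold $\mc{E}^{m+1}$ and must, given any internal stepping of the composed program $\mc{T}_1\mc{D}_1$ that makes a set of outgoing channels $\Upsilon_1$ ready along $\Delta', \Delta \Vdash K$, exhibit a matching schedule for $\mc{T}_2\mc{D}_2$ producing the same $\Upsilon_1$, and then show that the resulting configurations lie in the value interpretation $\mc{V}^{m+1}$ for every channel in $\Upsilon_1$ and every shared incoming channel in $\Theta_1 \cap \Theta_2$.

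The central idea is a \emph{projection/rely-guarantee} decomposition. Since $\mc{T}_i$ and $\mc{D}_i$ are disjoint subconfigurations communicating only along the internal channel $u_\alpha^c$, any stepping of $\mc{T}_i\mc{D}_i$ factors uniquely into steps of $\mc{T}_i$ (viewed as a program with $\mc{D}_i$ in its closing context, as in the second hypothesis) and steps of $\mc{D}_i$ (viewed as a program with $\mc{T}_i$ in its left substitution, as in the first hypothesis). An observation along $K$ or $\Delta$ is then an observation of $\mc{D}_i$ and is discharged by the first hypothesis; an observation along $\Delta'$ is an observation of $\mc{T}_i$ and is discharged by the second. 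The crux is the now-internal traffic on $u_\alpha^c$: a message exchanged here is \emph{outgoing} for one of the two hypotheses and \emph{incoming} for the other, with the direction fixed by the polarity of $T$. Concretely, the producing side guarantees relatedness of the $u_\alpha^c$-message through its value interpretation (e.g.\ $\mc{V}_{\cdot;u_\alpha}$ of the second hypothesis when $T$ is positive), and the consuming side uses that relatedness to continue relating its residuals. This rely-guarantee loop is what keeps the two runs of the composition synchronized.

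Two points make the index bookkeeping work. First, because $u_\alpha^c$ is \emph{not} in the composed interface $\Delta', \Delta \Vdash K$, internal $u_\alpha^c$ exchanges do not consume the observation budget $m$ of the conclusion, whereas they \emph{do} consume the budget in the two hypotheses. This is precisely why both hypotheses and the goal are quantified over all $m$: a single external observation of the composition may be preceded by an arbitrary finite number of internal $u_\alpha^c$ exchanges, and we instantiate the hypotheses at whatever finite index that requires. Second, when the value interpretation unfolds and we must place the continuations back into $\mc{E}^{m}$ of the composition, we re-establish the two compositionality hypotheses for the stepped configurations: instantiating the hypotheses at \emph{all} indices yields their continuations at all indices, so the induction hypothesis (compositionality at index $m$) applies to the residual trees. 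For the higher-order cases ($\chanout$, $\chanin$), where a message on $u_\alpha^c$ carries an entire subtree rather than a label, relatedness of the transmitted subtrees is again supplied by the value interpretation of the producing hypothesis, and these subtrees simply migrate between program and substitution without breaking the invariant.

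The main obstacle I anticipate is the scheduling argument: constructing, from a given finite stepping of $\mc{T}_1\mc{D}_1$, a stepping of $\mc{T}_2\mc{D}_2$ that produces \emph{exactly} the same outgoing set $\Upsilon_1$. This requires exploiting confluence of the linear session calculus to reorder and match the internal $u_\alpha^c$ exchanges on the two sides, and then verifying that projecting the composite schedule onto the $\mc{T}$-steps and the $\mc{D}$-steps yields schedules that are admissible witnesses for the existentials in the two hypotheses—in particular that they generate compatible incoming and outgoing channel sets. Threading the changing step indices through this simulation, ensuring every appeal to a hypothesis is at a large-enough index and that the residual hypotheses survive at all indices for the induction hypothesis, is the most delicate part of the accounting.
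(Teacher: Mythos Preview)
Your plan is essentially the paper's approach: the rely-guarantee decomposition along the now-internal channel $u_\alpha^c$, with the $\forall m$ quantification on both hypotheses absorbing the unbounded internal $u_\alpha^c$ traffic, is exactly the structure of the appendix proof. Two remarks that sharpen what you already anticipate: (i) the step you phrase as ``instantiating the hypotheses at all indices yields their continuations at all indices'' is precisely \Cref{lem:moveq}, which the paper isolates as a separate lemma because it requires its own argument (uniqueness of a minimal $\mc{D}''_2$ for a given $\Upsilon_1$, plus backward closure of $\mc{V}$); you should invoke it rather than re-derive it inline. (ii) The scheduling obstacle you flag---matching $\Upsilon_1$ on the second side---is discharged by that same minimal-configuration argument together with confluence, so once \Cref{lem:moveq} is in hand the existential witness for $\mc{T}_2\mc{D}_2$ is obtained by composing the witnesses the two hypotheses supply for $\mc{D}_2$ and $\mc{T}_2$ separately.
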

\begin{proof}
For the complete proof see Lem.~C.15 in the appendix.
\end{proof}
The lemma allows us to compose two pairs of trees only if they are related \emph{for all} $m$. In the situation described above, we need to ensure that the value interpretation establishing the relation between $\mc{T}_3$ and $\mc{T}_2$ holds for every $m$. In other words, we need to prove that the quantifier, ${\bf\color{violet}\forall\,m.}$, originally given over the term interpretation carries over to the value interpretation too.
\begin{lemma}\label{lem:moveq}If ${\small{ \bf\color{violet}\forall m.}({\lr{\mc{C}_1}{\mc{D}_1}{\mc{F}_1}}, \lr{\mc{C}_2}{\mc{D}_2}{\mc{F}_2}) \in \mc{E}^\xi_{ \Psi_0}\llbracket \Delta \Vdash K \rrbracket^{m+1},}$then:
{\small\[\begin{array}{l}
\m{if}\; \lr{\mc{C}_1}{\mc{D}_1}{\mc{F}_1}\;\mapsto^{{*}_{\Theta; \Upsilon_1}}_{\Delta \Vdash K} \; \lr{\mc{C}_1}{\mc{D}'_1}{\mc{F}_1}\, \m{then}\\[2pt]\exists \mc{D}'_2 \m{\,such \, that\,}\,\lr{\mc{C}_2}{\mc{D}_2}{\mc{F}_2}\mapsto^{{*}_{\Theta'; \Upsilon_1}}_{\Delta \Vdash K} \lr{\mc{C}_2}{\mc{D}'_2}{\mc{F}_2}, \; \m{and}\\[2pt]
 \forall \mc{C}'_1,\mc{F}'_1,\mc{C}'_2, \mc{F}'_2\,\m{if}\, \lr{\mc{C}_1}{\mc{D}'_1}{\mc{F}_1}\mapsto^{{*}_{\Theta_1;\Upsilon_1}}_{\Delta \Vdash K} \lr{\mc{C}'_1}{\mc{D}'_1}{\mc{F}'_1}\,\\[2pt] \qquad \qquad \quad\;\;\m{and}\,\lr{\mc{C}_2}{\mc{D}'_2}{\mc{F}_2}\mapsto^{{*}_{\Theta_2;\Upsilon_1}}_{\Delta \Vdash K} \lr{\mc{C}'_2}{\mc{D}'_2}{\mc{F}'_2},\,\m{then}\\[2pt] 
 \forall\, x \in \Upsilon_1.\,{\bf\color{violet}\forall m.} (\lr{\mc{C}'_1}{\mc{D}'_1}{\mc{F}'_1}; \lr{\mc{C}'_2}{\mc{D}'_2}{\mc{F}'_2})\in \mc{V}^\xi_{\Psi_0}\llbracket  \Delta \Vdash K \rrbracket_{\cdot;x}^{m+1}\,\m{and}\,\\[2pt]
\forall\, x \in (\Theta_1 \cap \Theta_2).\,{\bf\color{violet}\forall m.} (\lr{\mc{C}'_1}{\mc{D}'_1}{\mc{F}'_1}; \lr{\mc{C}'_2}{\mc{D}'_2}{\mc{F}'_2})\in \mc{V}^\xi_{\Psi_0}\llbracket  \Delta \Vdash K \rrbracket_{x;\cdot}^{m+1}
\end{array}
\]}
\end{lemma}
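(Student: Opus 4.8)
The plan is to unfold both the hypothesis and the goal against the definition of $\mc{E}$ in \Cref{fig:rel_term} and observe that the two statements differ \emph{only} in where the quantifier ${\bf\color{violet}\forall m}$ sits. The hypothesis asserts, for each $m$ separately, the existence of a matching reduction of $\mc{D}_2$ together with the value-interpretation clauses at index $m+1$; the goal asks for a \emph{single} reduction of $\mc{D}_2$ after which the value-interpretation clauses hold at index $m+1$ for all $m$ simultaneously. Crucially, everything in the definition of $\mc{E}^{m+1}$ except the trailing memberships in $\mc{V}^\xi_{\Psi_0}\llbracket\Delta\Vdash K\rrbracket^{m+1}$ is independent of $m$: the $\m{Tree}$ premise, the universally-quantified reduction $\lr{\mc{C}_1}{\mc{D}_1}{\mc{F}_1}\mapsto^{*_{\Theta;\Upsilon_1}}_{\Delta\Vdash K}\lr{\mc{C}_1}{\mc{D}'_1}{\mc{F}_1}$ that fixes $\Upsilon_1$ and $\Theta$, the existentially-quantified matching reduction of $\mc{D}_2$, and the inner reductions producing $\mc{C}'_i,\mc{F}'_i$ are all purely operational. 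So I would fix one arbitrary $\mc{D}_1$-reduction (hence a fixed $\Upsilon_1$) and concentrate on producing a single uniform witness $\mc{D}'_2$.

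The main obstacle is exactly commuting ${\bf\color{violet}\forall m}$ past the existential $\exists \mc{D}'_2$: a priori the hypothesis supplies a family of witnesses $\mc{D}^{\prime(m)}_2$, one per index, and I must collapse them to a single witness good for all $m$. This is where the choice of step index as the number of \emph{observations} pays off, since the reduction demanded of $\mc{D}_2$, namely $\lr{\mc{C}_2}{\mc{D}_2}{\mc{F}_2}\mapsto^{*_{\Theta';\Upsilon_1}}_{\Delta\Vdash K}\lr{\mc{C}_2}{\mc{D}'_2}{\mc{F}_2}$, is characterised solely by the requirement that it make exactly the interface channels in $\Upsilon_1$ ready, a condition with no reference to $m$. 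I would then appeal to confluence of the linear session-typed dynamics: because $\Upsilon_1$ is fixed and a message placed on an interface channel is not consumed by any further $\mc{D}_2$-internal step, any two reductions of $\mc{D}_2$ that expose precisely $\Upsilon_1$ must agree on the ready messages along those channels and can differ only by internal steps. I can therefore select one canonical witness $\mc{D}'_2$ and argue that each per-index witness $\mc{D}^{\prime(m)}_2$ is interchangeable with it for the purposes of the value-interpretation clauses.

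With the uniform witness $\mc{D}'_2$ in hand, the remainder is bookkeeping. For every inner pair of reductions to $\mc{C}'_1,\mc{F}'_1,\mc{C}'_2,\mc{F}'_2$ and every focus channel $x\in\Upsilon_1$ (resp.\ $x\in\Theta_1\cap\Theta_2$), each instance of the hypothesis at index $m$ delivers membership in $\mc{V}^\xi_{\Psi_0}\llbracket\Delta\Vdash K\rrbracket^{m+1}_{\cdot;x}$ (resp.\ the ${}_{x;\cdot}$ variant) for the canonical witness; since this holds for every $m$, I obtain the desired ${\bf\color{violet}\forall m}$ form inside the value interpretation, and the outgoing and incoming clauses are handled identically. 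I expect the genuinely delicate point to be justifying the interchangeability of witnesses precisely, that is, establishing that the value interpretation is invariant under the residual internal reductions left over between $\mc{D}'_2$ and each $\mc{D}^{\prime(m)}_2$; this is sound because each case of \Cref{fig:rel_value_right} and \Cref{fig:rel_value_left} peels off only the ready message along the focus channel and defers the remainder to the term interpretation at index $m$, which re-admits internal stepping. Confluence together with the observation-based indexing thus does the real work, while everything downstream is a direct transcription of the definition of $\mc{E}$ with the quantifier relocated, exactly as needed to feed \Cref{lem:compositionality} in the transitivity argument.
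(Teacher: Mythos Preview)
Your proposal is correct and follows essentially the same two-step strategy as the paper: (1) exploit that the set $\Upsilon_1$ and the operational reductions are index-independent to produce a single uniform witness for $\mc{D}_2$, and (2) use a closure property of the value interpretation under internal $\mc{D}_2$-steps to reconcile the uniform witness with the per-index witnesses supplied by the hypothesis.

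The paper sharpens your ``canonical witness via confluence'' into a concrete choice: it constructs the unique \emph{minimal} $\mc{D}''_2$ with $\mc{D}_2 \mapsto^{*_{\Upsilon_1}}_{\Delta\Vdash K} \mc{D}''_2$, minimal in the sense that every other witness $\mc{D}'_2$ exposing $\Upsilon_1$ satisfies $\mc{D}''_2 \mapsto^{*_{\Upsilon_1}}_{\Delta\Vdash K} \mc{D}'_2$. This orientation matters: with $\mc{D}''_2$ as the chosen witness, each per-index witness $\mc{D}^{(m)}_2$ lies \emph{downstream} of it, so a single \emph{backward} closure lemma for $\mc{V}$ (pulling membership back along internal steps of the second component) suffices. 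Your looser appeal to confluence would, if you picked an arbitrary witness, require joining via a common descendant and hence both forward and backward closure; selecting the minimal configuration avoids that detour. Your justification of the closure property---that the value clauses peel off only the ready message on the focus channel and defer the rest to $\mc{E}^m$, which re-admits internal stepping---is exactly the content of the paper's backward-closure lemma.
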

 \begin{proof}
  The proof relies on two results: (1) For each {\small$\Upsilon_1$}, if {\small$\mc{D}_2 \mapsto^{*_{\Upsilon_1}}_{\Delta \Vdash K}\mc{D}$} for some $\mc{D}$, we can build a unique \emph{minimal} configuration {\small$\mc{D}''_2$} with {\small$\mc{D}_2 \mapsto^{*_{\Upsilon_1}}_{\Delta \Vdash K}\mc{D}''_2$}. It is minimal in the sense that for every $\mc{D}'_2$, if {\small$\mc{D}_2 \mapsto^{*_{\Upsilon_1}}_{\Delta \Vdash K}\mc{D}'_2$}, then {\small$\mc{D}''_2 \mapsto^{*_{\Upsilon_1}}_{\Delta \Vdash K}\mc{D}'_2$}. (2) A standard backward closure property for the value interpretation of the logical relation. An interested reader may consult Lem.~C.8,~C.10,~C.14 in the appendix.
 \end{proof}
Our choice of step index is an enabler for the proof of \Cref{lem:moveq}. Choosing the number of observations as the index ensures that the set $\Upsilon_1$ built in $\lr{\mc{C}_1}{\mc{D}_1}{\mc{F}_1}\;\mapsto^{{*}_{\Theta; \Upsilon_1}}_{\Delta \Vdash K} \; \lr{\mc{C}_1}{\mc{D}'_1}{\mc{F}_1}$ is independent from the index $m$. This allows us to build the unique minimal $\mc{D}''_2$ for each $\Upsilon_1$ without the need to instantiate {\bf\color{violet}$\forall\,m.$} in the assumption. If we were to use the number of computation or unfolding steps as a step index, the set $\Upsilon_1$ would depend on the number of steps of the first program, and proving \Cref{lem:moveq} would have been more challenging, if not impossible. 
\subsection{Adequacy}
We show that two logically equivalent programs are indistinguishable by any well-typed context.

The relation $\sim^m$ is a weak stratified bisimulation based on a labeled transition system (LTS).
It is bounded by the number of observations that can be made and thus ``stratified'' to make the definition inductive.
For {\small${(\lr{\mathcal{C}_1}{\mathcal{D}_1} {\mathcal{F}_1}}, \lr{\mathcal{C}_2}{\mathcal{D}_2} {\mathcal{F}_2}) \in \m{Tree}_{\Psi_0}(\Delta \Vdash K)$}, we read ${\lr{\mathcal{C}_1}{\mathcal{D}_1} {\mathcal{F}_1}} \sim^{m+1} \lr{\mathcal{C}_2}{\mathcal{D}_2} {\mathcal{F}_2}$ as: if $\lr{\mathcal{C}_1}{\mathcal{D}_1} {\mathcal{F}_1}$, after none or many internal steps, produces an outgoing message $q$ or an incoming message $\overline{q}$ along an observable channel $x \in \Delta$ or $x \in K$, then  $\lr{\mathcal{C}_2}{\mathcal{D}_2} {\mathcal{F}_2}$ will produce the same outgoing message $q$ or incoming message $\overline{q}$ along $x$ with none or many internal steps. Moreover, the post-step configurations are related by $\sim^{m}$. The same holds when $\lr{\mathcal{C}_2}{\mathcal{D}_2} {\mathcal{F}_2}$ produces a message.


\begin{theorem}[Adequacy]\label{thm:adeq}
If {\small $(\Delta \Vdash \mathcal{D}_1::K)  \equiv^{\Psi_0}_{\xi} (\Delta \Vdash \mathcal{D}_2:: K)$} and {\small${(\lr{\mathcal{C}}{\mathcal{D}_1} {\mathcal{F}}}, \lr{\mathcal{C}}{\mathcal{D}_2} {\mathcal{F}}) \in \m{Tree}_{\Psi_0}(\Delta \Vdash K)$},
 then \[\forall m. {\lr{\mathcal{C}}{\mathcal{D}_1} {\mathcal{F}}} \sim^m \lr{\mathcal{C}}{\mathcal{D}_2} {\mathcal{F}}.\]
\end{theorem}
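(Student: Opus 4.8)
The plan is to prove adequacy by induction on the stratification index $m$, exploiting the fact that both $\sim^m$ and the term interpretation $\mc{E}^{m}$ are stratified by the \emph{same} quantity, the number of observations, so that a single visible move in the bisimulation game corresponds to exactly one decrement of the logical-relation index. The base case $m=0$ is immediate, since $\sim^0$ demands only membership in $\m{Tree}_{\Psi_0}(\Delta \Vdash K)$, which is among the hypotheses. For the step case I would unfold the bisimulation game at $m+1$ and consider a visible move of the left configuration: after finitely many internal steps it makes ready an observable message $q$ along a channel $x$, either outgoing (so $x \in \Upsilon_1$) or incoming (the dual case $x \in \Theta$). By symmetry of $\equiv^{\Psi_0}_{\xi}$ and the bisimulation game, the move originating on the right is handled by the companion membership, so I concentrate on the left.

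To produce the matching move on the right I would instantiate the term interpretation. From $\equiv^{\Psi_0}_{\xi}$ I obtain $(\mc{C}\mc{D}_1\mc{F}, \mc{C}\mc{D}_2\mc{F}) \in \mc{E}^\xi_{\Psi_0}\llbracket \Delta \Vdash K\rrbracket^{m+1}$. The first clause of \Cref{fig:rel_term} supplies a $\mc{D}_2'$ witnessing that the right program can be stepped so as to generate the very same outgoing channel set $\Upsilon_1$, and once the shared substitutions $\mc{C}, \mc{F}$ are allowed to settle, the value interpretation with focus $\cdot;x$ (respectively $x;\cdot$) fires. For a ground-type channel ($1$, $\oplus$, $\&$) the value interpretation forces literally identical messages, so the right configuration exhibits exactly $q$; for a higher-order channel ($\otimes$, $\multimap$) the two messages carry the same continuation and root-channel names by $\alpha$-variance, hence the same LTS label, while the transported subtrees $\mc{T}_1, \mc{T}_2$ are only \emph{related} by $\mc{E}$, not identical.

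It remains to relate the post-move configurations by $\sim^m$, which is where the main difficulty lies. The value interpretation only delivers relatedness of the continuations at the single index $m$, whereas the inductive hypothesis requires full logical equivalence, i.e.\ relatedness at \emph{all} indices. I would bridge this gap with \Cref{lem:moveq}, which lifts the $\forall m$ quantifier from the term interpretation through to the value interpretation and hence to the continuation's term interpretation, re-establishing $\equiv^{\Psi_0}_{\xi}$ for the successor configurations. The genuinely delicate case is a higher-order send: there the transported trees $\mc{T}_1$ and $\mc{T}_2$ migrate into the substitutions, so the two runs no longer share a common right-context and the inductive hypothesis, stated for configurations closed with the \emph{same} context, does not apply verbatim. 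The resolution is to re-partition. Since $\mc{T}_i$ was emitted \emph{by} the program, I would fold it back into $\mc{D}_i'$ and use \Cref{lem:compositionality} to show that the composite programs $\mc{T}_1\mc{D}_1'$ and $\mc{T}_2\mc{D}_2'$ are logically equivalent under the now-common substitutions (extended on the freshly exposed, still-observable channel $x_\beta$). Because $\sim^m$ is a relation on whole configurations, this re-partition leaves the underlying multiset untouched, so $\sim^m$-relatedness of the composites, obtained from the inductive hypothesis, is exactly the required relatedness of the post-move configurations.

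I expect this higher-order case to be the principal obstacle, precisely because it forces the interplay of \Cref{lem:compositionality}, used to re-assemble a common context out of divergent transported trees, with \Cref{lem:moveq}, used to secure equivalence at all indices before the inductive hypothesis can be applied. A secondary, purely bookkeeping burden is matching the observable interface after each move, namely updating the focus channel, renaming continuation channels by $\alpha$-variance, and checking that the projection $\Downarrow\xi$ keeps the newly exposed channels observable; these steps are routine once the index and context alignment described above is in place.
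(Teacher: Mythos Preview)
Your proposal is correct and follows essentially the same route as the paper. The paper's proof sketch explicitly points back to the transitivity argument and its rely-guarantee structure, and the two ingredients you single out---\Cref{lem:moveq} to push the $\forall m$ quantifier through to the value interpretation, and \Cref{lem:compositionality} to re-absorb the transported trees $\mc{T}_i$ so that the closing context remains common---are precisely the lemmas that drive that argument. Your observation that the higher-order send is the crux, because it is the one place where the shared substitutions $\mc{C},\mc{F}$ threaten to diverge, matches the paper's diagnosis exactly.
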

\begin{proof}
The techniques we used in the proof are similar to those used in the proof of transitivity and depend on the rely-guarantee property that we explained in \Cref{subsec:transitivity}. See Thm.~C.24 in the appendix. 
\end{proof}



\section{Related work}

The seminal works on IFC type systems for sequential imperative languages~\cite{Volpano96} and multi-threaded imperative languages~\cite{SmithVolpanoPOPL1998} have spurred a multitude of work on IFC type systems for imperative programming (c.f.,~\cite{SabelfeldIEE2003}).  We focus here on the more closely related systems that we categorize as follows:

\textbf{IFC type systems for binary session types:} The work most closely related to ours is the one by \citet{DerakhshanLICS2021}, which develops an IFC type system for intuitionistic linear logic session types.
The authors prove noninterference for their language, using a logical relation, the first logical relation for noninterference for session types.
In contrast to our work, their work does not support recursive types and exploits that programs are terminating.
Moreover, we phrase a stronger adequacy statement and prove our logical relation to be an equivalence relation.

\textbf{IFC type systems for multiparty session types:}
IFC type systems have also been explored for multiparty session types \cite{CapecchiCONCUR2010,CapecchiARTICLE2014,CastellaniARTICLE2016,Ciancaglini2016}.
These works explore declassification \cite{CapecchiCONCUR2010,CapecchiARTICLE2014} and flexible run-time monitoring techniques \cite{CastellaniARTICLE2016,Ciancaglini2016}.
Our work not only differs in use of session type paradigm (\ie binary vs. multiparty) but also in use of a logical relation for showing noninterference.

\textbf{IFC type systems for process calculi:}
Our work is more distantly related with IFC type systems developed for process calculi \cite{HondaESOP2000,HondaYoshidaPOPL2002,CrafaARTICLE2002,CrafaTGC2005,CrafaFMSE2006,HondaYoshidaTOPLAS2007,Crafa2007,HENNESSYRIELY2002,HENNESSY20053,KOBAYASHI2005,ZDANCEWIC2003,POTTIER2002} by virtue of its underlying concurrent, message-passing computation model.
These works prevent information leakage by associating a security label with channels/types/actions \cite{HondaYoshidaPOPL2002,HondaYoshidaTOPLAS2007}, read/write policies with channels \cite{HENNESSYRIELY2002,HENNESSY20053}, or capabilities with expressions \cite{CrafaARTICLE2002}.
\citet{HondaYoshidaPOPL2002,HondaYoshidaTOPLAS2007} moreover also use a substructural type system and prove a sound embedding of Dependency Core Calculus (DCC)~\cite{AbadiPOPL1999} into their calculus.
Our work sets itself apart from these works in its use of session types and meta theoretic developments based on a novel recursive logical relation.


\textbf{Logical relations for session types and noninterference in general:}
The application of logical relations to session types has focused predominantly on unary logical relations for proving termination~\cite{PerezESOP2012, PerezARTICLE2014,DeYoungFSCD2020} with the exception of a binary logical relation for parametricity \cite{CairesESOP2013} and noninterference \cite{DerakhshanLICS2021} discussed above.
Our noninterference definition is moreover
inspired by \citet{BowmanIAhmedCFP2015}, which also define noninterference in terms of a logical
relation, albeit in a sequential purely functional setting.
\citet{OddershedePOPL2021} recently contributed a logical relation for a higher-order functional language with higher-order store, recursive types, existential types, impredicative type polymorphism, and label polymorphism.
They use a semantic approach to proving noninterference,  allowing integration of syntactically well-typed and ill-typed components, if the latter are shown to be semantically sound.
Besides the difference in language, the authors consider termination-insensitive noninterference only.

\textbf{Relationship to logical relations for stateful languages:}
Logical relations have been scaled to accommodate state using Kripke logical relations \cite{PittsStarkHOOTS1998}.
Kripke logical relations are indexed by a \emph{possible world W}, providing a semantic model of the heap.
Invariants can then be imposed that must be preserved by any future worlds W'.
Kripke logical relations have been successfully combined with step indexing \cite{AppelMcAllesterTOPLAS2001,AhmedESOP2006} to address circularity arising in higher-order stores \cite{AhmedPOPL2009,DreyerPOPL2010,DreyerJFP2012} and to express state transition systems.
Our \emph{session logical relation}  does \emph{not} use a Kripke logical relation because it relies on session typing of the configurations that it relates.
Session typing in turn governs state transitions and avoids circular configuration topologies by virtue of linearity.






 \section*{\acksname}
This work was supported by the United States Air Force Office of Scientific Research (AFOSR) award FA9550-21-1-0385 (Tristan Nguyen, program manager).  Any opinions, findings and conclusions or recommendations expressed in this material are those of the authors and do not necessarily reflect the views of the AFOSR.

\bibliographystyle{ACM-Reference-Format}
\bibliography{ref.bib}

\clearpage

\includepdf[pages=-]{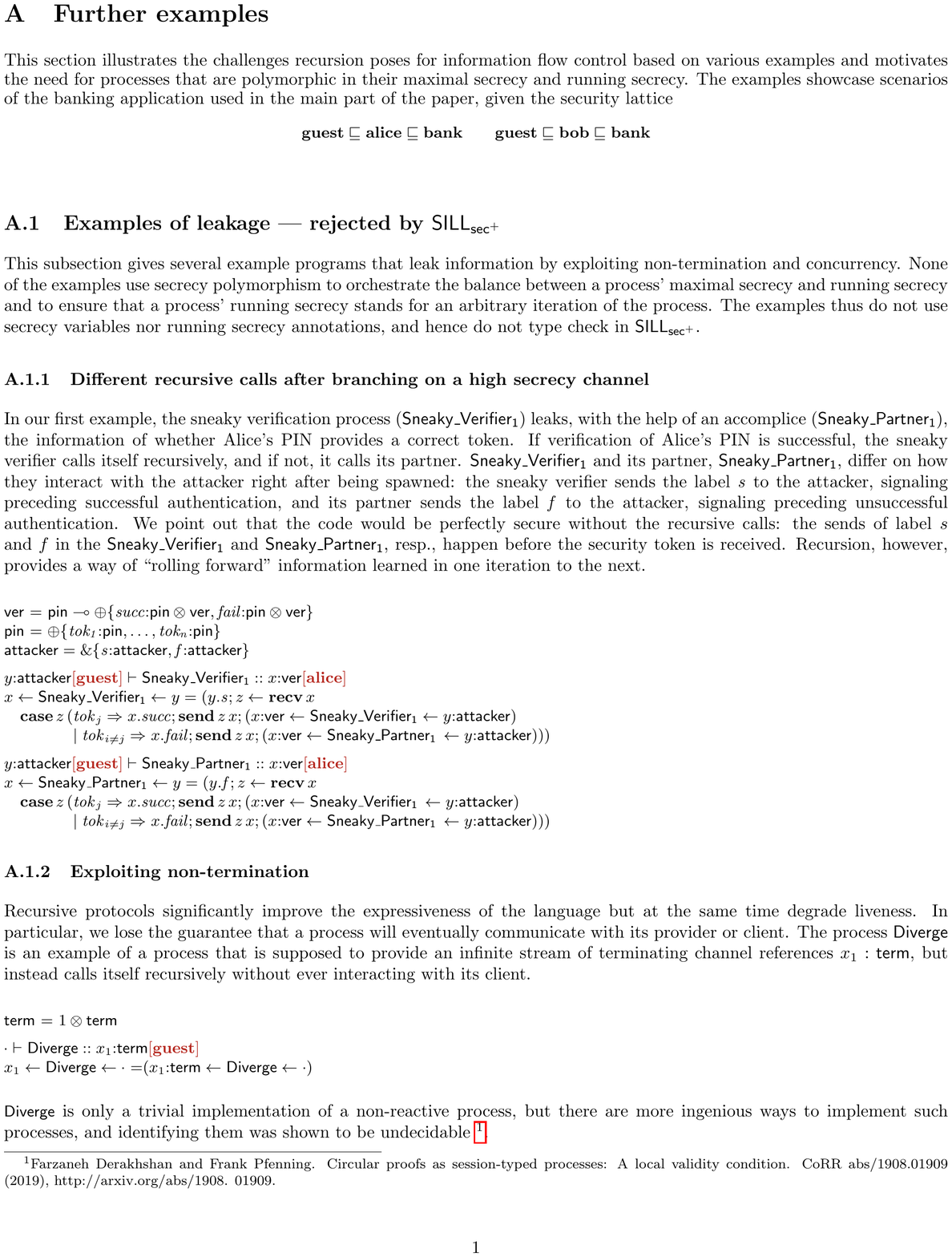}

\end{document}